% !TeX spellcheck = en_GB
\documentclass[smallextended]{article}

\usepackage{amsmath}
\usepackage{amsfonts}
\usepackage{amssymb}

\usepackage{bbold}
\usepackage{comment}
\usepackage{subfigure}
\usepackage{tikz}
\usetikzlibrary{arrows,automata,matrix}
\setcounter{MaxMatrixCols}{20}

\usepackage{authblk}

% Symbols %
\newcommand{\setofQS}{\varOmega}
\newcommand{\TOM}[1]{\mathcal{#1}}
\newcommand{\TOMel}[1]{\mathcal{#1}}
\newcommand{\R}{\mathbb{R}}
\newcommand{\HH}{\mathcal{H}}
\newcommand{\1}{\mathbb{1}}
\newcommand{\ket}[1]{\left|#1\right>}
\newcommand{\bra}[1]{\left<#1\right|}
\newcommand{\ketbra}[2]{\ket{#1}\!\!\bra{#2}}

\DeclareMathOperator{\tr}{tr}

\newtheorem{definition}{Definition}
\newtheorem{lemma}{Lemma}

\newtheorem{theorem}{Theorem}
\newtheorem{remark}{Remark}

\newtheorem{proof}{Proof}

\title{Quantum Hidden Markov Models based on Transition Operation Matrices}

\author[1]{Micha\l{} Cholewa} 
\author[1]{Piotr Gawron}
\author[1]{Przemys\l{}aw G\l{}omb}
\author[1,2]{Dariusz Kurzyk}

\affil[1]{Institute of Theoretical and Applied Informatics, Polish Academy of Sciences, Ba\l{}tycka 5, 44-100 Gliwice, Poland}
\affil[2]{Institute of Mathematics, Silesian University of Technology, Kaszubska 23, Gliwice 44-100, Poland}
\affil[]{\it\{mcholewa,gawron,przemg,dkurzyk\}@iitis.pl}

\date{10.02.2017}

\begin{document}
\maketitle

\begin{abstract}
In this work, we extend the idea of Quantum Markov chains [S. Gudder. Quantum
Markov chains. J. Math. Phys., 49(7), 2008] in order to propose Quantum Hidden
Markov Models (QHMMs). For that, we use the notions of Transition Operation
Matrices (TOM) and Vector States, which are an extension of classical stochastic
matrices and probability distributions. Our main result is the Mealy QHMM
formulation and proofs of algorithms needed for application of this model:
Forward for general case and Vitterbi for a restricted class of QHMMs. We show
the relations of the proposed model to other quantum HMM propositions and
present an example application.

\textbf{keywords}: Hidden Markov Models \and open quantum walks \and Transition Operation Matrices
\end{abstract}

\section{Introduction}
The most basic Markov model is a Markov chain, which can be defined as a
stochastic process with the Markov property. Formally, a Markov chain is a
collection of random variables $\{n_t, t\geq 0\}$ having the property that
$
P(\!n_{t+1}\!=\!S_{k_{t+1}}|n_1\!=\!S_{k_1},n_2\!=\!S_{k_2},...,n_t\!=\!S_{k_t}\!)\!=\!P(\!n_{t+1}\!=\!S_{k_{t+1}}|n_t\!=\!S_{k_t}\!),
$
where the values $\{S_1,...,S_T\}$ of $n_t$ are called states. They form the state space of the chain.
According to the Markov property, the current state of a chain is only dependent on
the previous state. Moreover, the state of a Markov chain is directly observed in
each step. Any Markov chain can be described by a directed graph called the state
diagram, where vertices are associated with states and each edge $(i,j)$
is labelled by the probability of going from $i$-th state to $j$-th state. The
information about Markov chain can be also represented by the initial state $\mathrm{\Pi}$
and the stochastic matrix called transition matrix $\mathrm{P}=[p_{ij}]$, such that
$p_{ij}=P(n_{t+1}=S_j|n_t=S_i)$. If we consider a Markov chain where
states are not observed directly, and these states generate symbols according to some
random variables, then we obtain a Hidden Markov Model (HMM). Hence, in the case of a 
Markov chain, the states correspond with observations, but for a HMM, the states
correspond with the random source of observations.

The classical Hidden Markov Model was introduced as a method of modelling signal
sources observed in noise. It is now extensively used, e.g. in speech and
gesture recognition or biological sequence analysis. Their popularity is a
result of their versatile structure, which is able to model wide variety of problems,
and effective algorithms that facilitate their application. The HMM is
related to three fundamental: given a~sequence of symbols of length 
$T$, $O=\left(o_1,o_2,\ldots,o_T\right)$, and a
HMM parametrized by $\lambda$,
\begin{enumerate}
\item Compute the $P(O|\lambda)$, probability that the sequence $O$ can be produced by a HMM $\lambda$.
\item Select the sequence of state indexes $N_T=\left(n_0, n_1, \dots, n_T\right)$ that
maximizes the probability $P(O|\lambda, N_T)$; in other words the most likely
state sequence in HMM $\lambda$ that produces $O$.
\item Adjust the model parameters $\lambda$ to maximize $P(O|\lambda)$.
\end{enumerate}
The above problems are solved, respectively, by the Forward, Vitterbi and
Baum-Welch algorithms. The effectiveness of those algorithms is based on
optimized procedure of computation, which uses a `trellis': a two dimensional
lattice structure of observations and states. This formulation is based on the
Markov property of model evolution and reduces the complexity from exponential
$\mathcal{O}(TN^T)$ to polynomial $\mathcal{O}(N^2T)$, where $T$ is the number
of observations and $N$ the number of model states \cite{rabiner1989tutorial}.

Depending on the formulation, there are two definitions of a Hidden Markov
Model: Mealy and Moore. In the former, the probability of next state being $n_{t+1}$
depends both on the current state $n_{t}$ and the generated output symbol $o_t$.
In the latter, the symbol generation is independent from state switch, i.e.
$P(n_{t + 1}=S_i|o_{t+1}=o, n_t=S_j) = P(n_{t + 1}=S_i|n_t=S_j)$. While the
expressive power of Moore and Mealy models is the same, i.e. a process can be
realized with Moore model if and only if it is realizable by Mealy model, the
minimal model order for the realization is lower in Mealy models
\cite{vanluyten2008equivalence}. In this work we focus only on Mealy models.

\subsection{Related work}\label{sec:related-work}
In this work we follow the scheme proposed by Gudder in \cite{gudder2008quantum}
and extend it in order to construct Quantum Hidden Markov Models (QHMMs). Gudder
introduced the notions of Transition Operation Matrices and Vector States, which
give an elegant extension of classical stochastic matrices and probability
distributions. These notions allow to define Markov processes that exhibit both
classical and quantum behaviour.

Below we review two areas of research most closely related to our work: open
quantum walks and Hidden quantum Markov models.

\paragraph{Open quantum walks}
In recent years a new sub-field of quantum walks has emerged. In series of
papers
\cite{petruccione2011open,attal2012open,sinayskiy2012efficiency,sinayskiy2012properties,attal2012opena,sinayskiy2013open}
Attal, Sabot, Sinayskiy, and Petruccione introduced the notion of Open Quantum
Walks. Theorems for limit distributions of open quantum random walks were
provided in \cite{konno2013limit}. In \cite{ampadu2014averaging} the average
position and the symmetry of distribution in the $SU(2)$ Open Quantum Walk is
studied.
The notion of open quantum walks is generalised to quantum operations of any
rank in \cite{pawela2015generalized} and analysed in \cite{sadowski2016central}.
In first of these two papers the notion of mean first passage time for a
generalised quantum walk is introduced and studied for class of walks on
Apollonian networks. In the second paper a central limit theorem for reducible
and irreducible open quantum walks is provided.
In a recent paper \cite{li2015quantum} authors introduce the notion of
hybrid quantum automaton -- an object similar to quantum hidden Markov model.
They use hybrid quantum automata and derived concepts in application to model
checking.

\paragraph{Quantum hidden Markov models}
Hidden quantum Markov models were introduced in \cite{monras2011hidden}. The
construction provided there by the authors is different from ours. In their work
the hidden quantum Markov model consists of a set of quantum operations
associated with emission symbols. The evolution of the system is governed by the
application of quantum operations on a quantum state. The sequence of emitted
symbols defines the sequence of quantum operations being applied on the initial
state of the hidden quantum Markov model.

\subsection{Our contribution}
In this work we propose a Quantum Hidden Markov model formulation using the
notions of Transition Operation Matrices. We focus on Mealy models, for which we
derive first the Forward algorithm in general case, then the Vitterbi algorithm,
for models restricted to those in which sub-TOMs' elements are
trace-monotonicity preserving quantum operations. Subsequently, we discuss the
relationship between our model and model presented in \cite{monras2011hidden}.
The paper ends with the example of application proposed model.

The paper is organised as follows: in Section~\ref{sec:tom} we collect the basic
mathematical objects and their properties, in Section~\ref{sec:qhmm} we define
Quantum Hidden Markov Models and provide Forward na Viterbi algorithm for these
models, in Section~\ref{sec:monras} we discuss the correspondences between
proposed models and models described in \cite{monras2011hidden},
Section~\ref{sec:examples} contains examples of application of our model, and
finally in Section~\ref{sec:conclusions} we conclude.

\section{Transition Operation Matrices}\label{sec:tom}
In what follows we provide basic elements of quantum information theory and
summarize definitions and properties of objects introduced by Gudder in
\cite{gudder2008quantum}.

\subsection{Quantum theory}
Let $\HH$ be a complex finite Hilbert space and $\mathcal{L}(\HH)$ be the set of
linear operators on $\HH$. We also denote the set of positive operators on
$\HH$ as $\mathcal{P^+}(\HH)$ and the set of positive semi-definite operators
on $\HH$ as $\mathcal{P}(\HH)$.

\begin{definition}[Quantum state]
A linear operator $\rho \in \mathcal{P}(\HH)$ is called a quantum state
if $\tr \rho = 1$. Set of quantum states is denoted by $\setofQS(\HH)$.
\end{definition}

\begin{definition}[Sub-normalised quantum state]
A linear operator $\rho \in \mathcal{P}(\HH)$ is called sub-normalised
\cite{cappellini2007subnormalized} quantum state if $\tr \rho \leq 1$. Set of sub-normalised quantum
states is denoted by $\setofQS_\leq(\HH)$.
\end{definition}

\begin{definition}[Positive map]
A linear map $\Phi\in \mathcal{L}(\mathcal{L}(\HH_1), \mathcal{L}(\HH_2))$ is called
positive map if, for every $\rho \in \mathcal{P}(\HH_1)$, $\Phi(\rho) \in
\mathcal{P}(\HH_2)$.
\end{definition}

\begin{definition}[Completely positive map]
A linear map $\Phi\in \mathcal{L}(\mathcal{L}(\HH_1), \mathcal{L}(\HH_2))$ is
called completely positive (CP) if for any complex Hilbert space $\HH_3$,
the map $\Phi \otimes
\1 \in \mathcal{L}(\mathcal{L}(\HH_1 \otimes \HH_3),\mathcal{L}(\HH_2 \otimes \HH_3))$ is positive.
\end{definition}

\begin{definition}[Trace preserving map]
A linear map $\Phi\in \mathcal{L}(\mathcal{L}(\HH_1),\mathcal{L}(\HH_2))$ is
called trace preserving if $\tr(\Phi(\rho)) = \tr \rho$ for every
$\rho \in \mathcal{L}(\HH_1)$.
\end{definition}

\begin{definition}[Trace non-increasing map]
A linear map $\Phi\in \mathcal{L}(\mathcal{L}(\HH_1),\mathcal{L}(\HH_2))$ is
called trace non-increasing if $\tr(\Phi(\rho)) \leq \tr \rho = 1$ for every
quantum state $\rho \in \setofQS(\HH_1)$.
\end{definition}

\begin{definition}[Quantum operation]
    A linear map $\Phi\in \mathcal{L}(\mathcal{L}(\HH_1),\mathcal{L}(\HH_2))$ is
    called a quantum operation if it is completely positive and trace non-increasing.
\end{definition}

\begin{definition}[Quantum channel]
A linear map $\Phi\in \mathcal{L}(\mathcal{L}(\HH_1),\mathcal{L}(\HH_2))$ is
called a quantum channel if it is completely positive and trace preserving.
\end{definition}

\begin{definition}[Quantum measurement]
By quantum measurement we call a mapping from a finite set $\Theta$ of
measurement outcomes to subset of set of measurement operators
$\mu: \Theta \to \mathcal{P}(\HH)$
such that $\sum\limits_{a\in \Theta} \mu(a)=\1$.
\end{definition}

With each measurement $\mu$ we associate a non-negative functional
$p: \Theta \to \R_+\cup\{0\}$ which maps measurement outcome $a$
for a given positive operator $\rho$ and measurement $\mu$ to non-negative real
number in the following way $p(a)_\rho=\tr \mu(a)\rho$.
If $\tr \rho=1$, for given $\rho$ and $\mu$ the value of $p$
can be interpreted as probability of obtaining measurement outcome $a$ in
quantum state $\rho$.

If $\rho$ is a sub-normalised state the trivial measurement $\mu:{a_e}\mapsto
\1$ measures the probability $p(a_e)_\rho=\tr \rho$ that the state $\rho$
exists. One should note that this kind of measurement commutes with any other
measurement and thus does not disturb the quantum system.

\subsection{Transition Operation Matrices}
The core object of the Gudder's scheme is Transition Operation Matrix (TOM)
which generalizes the idea of stochastic matrix.

\begin{definition}[Transition Operation Matrix]
Let $\HH_1$, $\HH_2$ denote two finite dimensional Hilbert spaces and
$\setofQS(\HH_1), \setofQS(\HH_2)$ denote sets of quantum states acting
on those spaces respectively.

A TOM is a matrix in form
$\TOM{E}=\{\TOM{E}_{ij}\}_{i,j=1}^{M,N}$,
where
$\TOM{E}_{ij}$ is completely positive map in $\mathcal{L}(\mathcal{L}(\HH_1),\mathcal{L}(\HH_2))$
such that for every $j$ and $\rho \in \setofQS(\HH_1)$
$\sum_{i}\TOM{E}_{ij}(\rho)\in \setofQS(\HH_2)$.
\end{definition}

Alternatively one can say that
$\TOM{E}=\{\TOM{E}_{ij}\}_{i,j=1}^{M,N}$ is a
TOM if and only if for every column $j$ $\sum_i \TOM{E}_{ij}$ is a~quantum
channel (Completely Positive Trace Preserving map).
A simple implication of this definition is that each $\TOM{E}_{ij}$ is CP-TNI
mapping.

Note that in this definition TOM has four parameters:
\begin{itemize}
\item size of matrix ``output'' (number of rows) --- $M$,
\item size of matrix ``input'' (number of columns) --- $N$,
\item ``input'' Hilbert space --- $\HH_1$,
\item ``output'' Hilbert space --- $\HH_2$.
\end{itemize}
The set of TOMs we will denote as $\Gamma^{M,N}(\HH_1,\HH_2)$.

\begin{definition}[Sub Transition Operation Matrix]
Let $\HH_1$, $\HH_2$ denote two finite dimensional Hilbert spaces,
$\setofQS(\HH_1)$ denotes set of quantum states acting on the first space and $\setofQS^\leq(\HH_2)$
denotes set of sub-normalised quantum states acting on the second Hilbert space.

A sub-TOM is a matrix in form $\TOM{E}=\{\TOM{E}_{ij}\}_{i,j=1}^{M,N}$,
where
$\TOM{E}_{ij}$ is completely positive map in $\mathcal{L}(\mathcal{L}(\HH_1),\mathcal{L}(\HH_2))$
such that for every $j$ and $\rho \in \setofQS(\HH_1)$.
$\sum_{i}\TOM{E}_{ij}(\rho)\in \setofQS^\leq(\HH_2)$.
\end{definition}
The set of sub-TOMs we will denote as $\Gamma_\leq^{M,N}(\HH_1,\HH_2)$.

\begin{definition}[Quantum Markov chain]
Let a TOM $\TOM{E}=\{\TOM{E}_{ij}\}_{i,j=1}^{M,N}$ be given.
Quantum Markov chain is a finite directed graph $G=(E,V)$ labelled by
$\TOM{E}_{ij}$ for $e\in E$ and by zero operator for $e \notin E$.
\end{definition}

\begin{definition}[Vector state]
Vector state is a column vector
$\alpha=[\alpha_1,\alpha_2,\ldots,\alpha_N]^T$ such
that $\alpha_i\in\setofQS_\leq(\HH)$ are sub-normalised quantum states and
$\sum_{i=1}^{N} \alpha_i \in \setofQS(\HH)$.
We will denote the set of vector states as $\Delta^N(\HH)$.
\end{definition}

\begin{definition}[Subnormalised vector state]
Subnormalised vector state is a column vector
$\alpha=[\alpha_1,\alpha_2,\ldots,\alpha_N]^T$ such
that $\alpha_i\in\setofQS_\leq(\HH)$ are sub-normalised quantum states and
$\sum_{i=1}^{N} \alpha_i \in \setofQS_\leq(\HH)$.
W will denote a set of sub-normalised vector states as $\Delta_\leq^N(\HH)$.
\end{definition}

\begin{theorem}[Gudder \cite{gudder2008quantum}]
\label{thm:vs-tom-vs}
Applying TOM $\TOM{E}\in\Gamma^{M,N}(\HH_1,\HH_2)$ on a vector state 
$\alpha\in\Delta^{N}(\HH_1)$ produces vector state
$\beta=\TOM{E}(\alpha)\in \Delta^{M}(\HH_2)$
where $\alpha=[\alpha_1,\alpha_2,\ldots, \alpha_N]^T$,
$\alpha_i\in\setofQS^\leq(\HH_1)$,
where $\beta=[\beta_1,\beta_2,\ldots, \beta_M]^T$,
$\beta_i\in\setofQS^\leq(\HH_2)$,
and
$\TOM{E}\in\Gamma^{M,N}(\HH_1,\HH_2)$, and
in the following way $\beta_i=\sum_{j=1}^{N}\TOM{E}_{ij}(\alpha_j)$.
\end{theorem}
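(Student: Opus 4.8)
The plan is to verify directly that the column vector $\beta=[\beta_1,\ldots,\beta_M]^T$ with $\beta_i=\sum_{j=1}^{N}\TOM{E}_{ij}(\alpha_j)$ satisfies the two defining conditions of a vector state in $\Delta^M(\HH_2)$: each component $\beta_i$ must be a sub-normalised quantum state, i.e. $\beta_i\in\setofQS_\leq(\HH_2)$, and the sum $\sum_{i=1}^{M}\beta_i$ must be a normalised quantum state, i.e. $\sum_{i=1}^{M}\beta_i\in\setofQS(\HH_2)$. I would treat positivity and the trace conditions separately.

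First I would settle positivity. Since $\alpha\in\Delta^{N}(\HH_1)$, each $\alpha_j\in\setofQS_\leq(\HH_1)\subset\mathcal{P}(\HH_1)$ is positive semi-definite, and each $\TOM{E}_{ij}$ is completely positive, hence in particular positive; therefore $\TOM{E}_{ij}(\alpha_j)\in\mathcal{P}(\HH_2)$. As a finite sum of positive semi-definite operators, each $\beta_i$ is positive semi-definite, and so is $\sum_{i}\beta_i$. This part is immediate and requires only the completely positive (hence positive) character of the TOM entries.

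The crux of the argument is the trace computation, where I would invoke the characterisation stated just after the definition of a TOM: for each fixed column $j$ the map $\sum_{i}\TOM{E}_{ij}$ is a quantum channel, hence trace preserving. Exchanging the (finite) order of summation and applying this column-wise gives
\begin{align}
\tr\left(\sum_{i=1}^{M}\beta_i\right)
&=\sum_{i=1}^{M}\sum_{j=1}^{N}\tr\TOM{E}_{ij}(\alpha_j)
=\sum_{j=1}^{N}\tr\left(\sum_{i=1}^{M}\TOM{E}_{ij}(\alpha_j)\right)\\
&=\sum_{j=1}^{N}\tr\alpha_j
=\tr\left(\sum_{j=1}^{N}\alpha_j\right)=1,
\end{align}
where the penultimate equality is trace preservation of the $j$-th column channel applied to $\alpha_j$, and the final equality holds because $\sum_{j}\alpha_j\in\setofQS(\HH_1)$ is normalised by the definition of the vector state $\alpha$.

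Finally I would assemble the pieces. The operator $\sum_{i}\beta_i$ is positive semi-definite with trace $1$, so it lies in $\setofQS(\HH_2)$, which is the second condition. For the first, each $\beta_i$ is positive semi-definite, and since all $\beta_i$ are positive with $\sum_{i}\tr\beta_i=1$, each individual trace satisfies $0\le\tr\beta_i\le 1$, so $\beta_i\in\setofQS_\leq(\HH_2)$. Both conditions then hold and $\beta\in\Delta^{M}(\HH_2)$. The only genuinely non-trivial point, and the one I would flag, is that the per-component bound $\tr\beta_i\le 1$ is obtained \emph{globally}, from the normalisation of the total together with positivity, rather than from any per-entry property of $\TOM{E}$; individually the maps $\TOM{E}_{ij}$ are merely trace non-increasing, so the column-channel (trace-preserving) structure is precisely what makes the total trace land exactly at $1$.
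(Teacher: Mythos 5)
Your proof is correct. There is, however, nothing in the paper to compare it against: Theorem~\ref{thm:vs-tom-vs} is imported from Gudder \cite{gudder2008quantum} and no proof of it appears in this paper. Your direct verification --- positivity of each $\beta_i$ from complete positivity of the entries $\TOM{E}_{ij}$, the total trace $\tr\big(\sum_{i}\beta_i\big)=1$ from the column-wise trace preservation (the paper's ``each column sums to a quantum channel'' characterisation), and the per-component bound $\tr\beta_i\leq 1$ deduced globally from positivity plus normalisation --- is the standard argument and essentially the one in Gudder's original paper. One point worth making explicit: you apply trace preservation to the \emph{sub-normalised} states $\alpha_j$, whereas the TOM definition only constrains columns on normalised inputs; this step is legitimate because a linear map that preserves the trace of every density operator preserves the trace of every operator in $\mathcal{L}(\HH_1)$ (density operators span the whole space), which is exactly the paper's stated definition of trace preservation. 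Using trace preservation in this linear form also silently spares you the zero-trace case split that the paper is forced into in its proof of Lemma~\ref{lem:prod-sub-toms}, where only trace non-increase on normalised states is available and the case $\tr\sigma=0$ must be handled separately.
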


\begin{theorem}[Gudder \cite{gudder2008quantum}]
\label{th:toms-product}
    Product of TOM $\TOM{A} \in \Gamma^{M,N}(\HH_1,\HH_2)$ and $\TOM{B} \in
     \Gamma^{N,K}(\HH_2,\HH_3)$ is a TOM $ \Gamma^{M,K}(\HH_1,\HH_3) \ni \TOM{C} = \TOM{B} \TOM{A}.$
\end{theorem}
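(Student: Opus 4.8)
The plan is to check directly that the matrix product $\TOM{C}=\TOM{B}\TOM{A}$, whose $(i,j)$ entry is the superoperator
\[
\TOM{C}_{ij}=\sum_{k=1}^{N}\TOM{B}_{kj}\circ\TOM{A}_{ik}\in\mathcal{L}(\mathcal{L}(\HH_1),\mathcal{L}(\HH_3)),
\]
satisfies the two requirements in the definition of a TOM: first, that each entry $\TOM{C}_{ij}$ is completely positive, and second, that for every column $j$ the column sum $\sum_{i}\TOM{C}_{ij}$ is a quantum channel, which by the alternative characterisation stated just after the definition of a TOM is equivalent to $\TOM{C}\in\Gamma^{M,K}(\HH_1,\HH_3)$. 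Note first that the composition $\TOM{B}_{kj}\circ\TOM{A}_{ik}$ is well typed: $\TOM{A}_{ik}$ sends $\mathcal{L}(\HH_1)$ to $\mathcal{L}(\HH_2)$ and $\TOM{B}_{kj}$ sends $\mathcal{L}(\HH_2)$ to $\mathcal{L}(\HH_3)$, so the shared index $k=1,\dots,N$ runs over the common inner size $N$ (the number of columns of $\TOM{A}$, equal to the number of rows of $\TOM{B}$).

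For the first property I would use that the class of completely positive maps is closed under composition and under finite sums: if $\Phi\otimes\1$ and $\Psi\otimes\1$ are positive for every ancilla, then so is $(\Psi\circ\Phi)\otimes\1=(\Psi\otimes\1)\circ(\Phi\otimes\1)$, and likewise a finite sum of such maps stays positive. Since every $\TOM{A}_{ik}$ and every $\TOM{B}_{kj}$ is completely positive (being an entry of a TOM), each composition $\TOM{B}_{kj}\circ\TOM{A}_{ik}$ is completely positive and hence so is the finite sum $\TOM{C}_{ij}$.

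For the second property the idea is to collapse the double summation into a composition of two channels. Fixing a column $j$ and summing the entries over $i$ gives
\[
\sum_{i}\TOM{C}_{ij}=\sum_{k=1}^{N}\TOM{B}_{kj}\circ\Bigl(\sum_{i}\TOM{A}_{ik}\Bigr),
\]
and the inner sum $\sum_{i}\TOM{A}_{ik}$ is, by hypothesis, the $k$-th column sum of $\TOM{A}$, i.e. a quantum channel. The goal is then to reorganise the outer summation over $k$, carrying the column sums of $\TOM{B}$, so that the whole expression becomes a single composition of a column-sum channel of $\TOM{B}$ with a column-sum channel of $\TOM{A}$. Once written in that form, complete positivity follows from the first part, and trace preservation follows from $\tr\bigl(\Psi_2(\Psi_1(\rho))\bigr)=\tr(\Psi_1(\rho))=\tr\rho$ for channels $\Psi_1,\Psi_2$; positivity together with trace one then places $\sum_i\TOM{C}_{ij}(\rho)$ in $\setofQS(\HH_3)$ for every $\rho\in\setofQS(\HH_1)$.

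The main obstacle is exactly this reorganisation: trace preservation cannot be read off term by term, because an individual entry $\TOM{B}_{kj}$ is only trace non-increasing, and the intermediate states produced by the different column-sum channels of $\TOM{A}$ depend on $k$. The crux of the proof is therefore to verify that the summation index of the product coincides with the index along which each factor sums to a channel, so that the two trace-preserving column sums decouple cleanly and the cascade of trace identities goes through. The completely-positive step and the "composition of channels is a channel" step are then routine by comparison.
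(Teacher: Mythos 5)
Your first step (complete positivity of each entry of $\TOM{C}$, via closure of CP maps under composition and finite sums) is fine, but the trace condition --- the entire content of the theorem --- is exactly what you do not prove: your final paragraph names ``the crux'' and stops without verifying it. Worse, with the entry formula you adopted, $\TOM{C}_{ij}=\sum_{k=1}^{N}\TOM{B}_{kj}\circ\TOM{A}_{ik}$, that crux is false, so no reorganisation can finish the argument. The column sum is $\sum_{i}\TOM{C}_{ij}=\sum_{k}\TOM{B}_{kj}\circ\Psi_k$, where $\Psi_k=\sum_{i}\TOM{A}_{ik}$ are the (generally distinct) column-sum channels of $\TOM{A}$; since each $\TOM{B}_{kj}$ is fed a $k$-dependent intermediate state, the sum over $k$ never reassembles the channel $\sum_k\TOM{B}_{kj}$. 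Concretely, take all spaces to be $\mathbb{C}^2$, let $\TOM{A}=[\Psi_1\;\Psi_2]\in\Gamma^{1,2}(\HH_1,\HH_2)$ with $\Psi_1(\rho)=\tr(\rho)\ketbra{0}{0}$ and $\Psi_2(\rho)=\tr(\rho)\ketbra{1}{1}$, and let $\TOM{B}\in\Gamma^{2,1}(\HH_2,\HH_3)$ be the single column with entries $\Phi_1(\rho)=\bra{0}\rho\ket{0}\,\sigma_1$ and $\Phi_2(\rho)=\bra{1}\rho\ket{1}\,\sigma_2$, with $\sigma_1,\sigma_2$ fixed states. Both are TOMs (their columns sum to channels), yet your product gives
\begin{equation*}
\TOM{C}_{11}(\rho)=\Phi_1(\Psi_1(\rho))+\Phi_2(\Psi_2(\rho))=\tr(\rho)\,(\sigma_1+\sigma_2),
\end{equation*}
whose trace is $2$ on any state. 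So under your reading the statement itself fails, and your proposed endgame (writing the column sum as a single composition of two channels) is structurally unavailable.

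The source of the trouble is the index pairing, which you took from the superscripts of the theorem as printed; those superscripts are inconsistent, and the product actually meant here --- Gudder's Lemma 2.2, which the paper cites in place of a proof, and which also underlies the paper's proof of the sub-TOM analogue, Lemma~\ref{lem:prod-sub-toms} --- is the ordinary matrix product, contracting the \emph{columns of $\TOM{B}$} with the \emph{rows of $\TOM{A}$}: $(\TOM{B}\TOM{A})_{ij}=\sum_{k}\TOM{B}_{ik}\circ\TOM{A}_{kj}$, with shapes read consistently as $\TOM{A}\in\Gamma^{M,N}(\HH_1,\HH_2)$, $\TOM{B}\in\Gamma^{K,M}(\HH_2,\HH_3)$, $\TOM{B}\TOM{A}\in\Gamma^{K,N}(\HH_1,\HH_3)$. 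Under this convention no factorisation into a composition of two channels is needed (indeed $\sum_i(\TOM{B}\TOM{A})_{ij}=\sum_k\bigl(\sum_i\TOM{B}_{ik}\bigr)\circ\TOM{A}_{kj}$ is a sum of compositions, not a single composition); instead, trace preservation is applied term by term in $k$. Each column sum $\sum_i\TOM{B}_{ik}$ preserves the trace of \emph{every} operator, so the $k$-dependence of the intermediate operators $\TOM{A}_{kj}(\rho)$ is harmless:
\begin{equation*}
\tr\Bigl(\sum_{i}(\TOM{B}\TOM{A})_{ij}(\rho)\Bigr)
=\sum_{k}\tr\Bigl(\Bigl(\sum_{i}\TOM{B}_{ik}\Bigr)\bigl(\TOM{A}_{kj}(\rho)\bigr)\Bigr)
=\sum_{k}\tr\bigl(\TOM{A}_{kj}(\rho)\bigr)
=\tr(\rho)=1,
\end{equation*}
where the last equality uses that the $j$-th column of $\TOM{A}$ sums to a channel. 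Combined with positivity of the output (your CP part), this places $\sum_i(\TOM{B}\TOM{A})_{ij}(\rho)$ in $\setofQS(\HH_3)$, i.e. $\TOM{B}\TOM{A}$ is a TOM. This term-by-term use of trace preservation of the left factor's column sums, under the correct contraction, is the idea missing from your proposal.
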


\begin{lemma}[Product of two sub-TOMs is a sub-TOM]
    \label{lem:prod-sub-toms}
    Product of sub-TOMs $\TOM{A} \in \Gamma_\leq^{M,N}(\HH_1,\HH_2)$ and $\TOM{B} \in
    \Gamma_\leq^{N,K}(\HH_2,\HH_3)$ is a sub-TOM $ \Gamma_\leq^{M,K}(\HH_1,\HH_3) \ni
    \TOM{C} = \TOM{B} \TOM{A}.$
\end{lemma}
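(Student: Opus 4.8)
The plan is to prove Lemma~\ref{lem:prod-sub-toms} by verifying directly that the product $\TOM{C} = \TOM{B}\TOM{A}$ satisfies the defining property of a sub-TOM: each entry must be completely positive, and for every column $k$ and every state $\rho \in \setofQS(\HH_1)$ the column sum $\sum_i \TOM{C}_{ik}(\rho)$ must be a sub-normalised state in $\setofQS^\leq(\HH_3)$. First I would fix the composition rule. Treating the TOMs as matrices whose entries are linear maps, with ``multiplication'' of entries being composition of maps and ``addition'' being pointwise sum, the product entry is $\TOM{C}_{ik} = \sum_{j=1}^{N} \TOM{B}_{ij} \circ \TOM{A}_{jk}$. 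Complete positivity of each $\TOM{C}_{ik}$ is immediate, since composition and sum of completely positive maps are completely positive.

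The main content is the column-sum condition. First I would take an arbitrary $\rho \in \setofQS(\HH_1)$ and a column index $k$, and compute
\begin{equation*}
\sum_{i=1}^{M} \TOM{C}_{ik}(\rho)
= \sum_{i=1}^{M} \sum_{j=1}^{N} \TOM{B}_{ij}\big(\TOM{A}_{jk}(\rho)\big)
= \sum_{j=1}^{N} \sum_{i=1}^{M} \TOM{B}_{ij}\big(\TOM{A}_{jk}(\rho)\big).
\end{equation*}
The interchange of the finite sums is trivially valid. The key observation is that the inner sum $\sum_i \TOM{B}_{ij}$ is exactly the column-$j$ sum of $\TOM{B}$, so I would introduce the notation $\Psi_j := \sum_{i=1}^{M} \TOM{B}_{ij}$, giving $\sum_i \TOM{C}_{ik}(\rho) = \sum_{j=1}^{N} \Psi_j\big(\TOM{A}_{jk}(\rho)\big)$. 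By the sub-TOM property of $\TOM{A}$, the vector $[\TOM{A}_{1k}(\rho), \ldots, \TOM{A}_{Nk}(\rho)]^T$ has entries in $\setofQS^\leq(\HH_2)$ summing to a state in $\setofQS^\leq(\HH_2)$.

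The remaining task is to show $\sum_{j} \Psi_j(\sigma_j) \in \setofQS^\leq(\HH_3)$, where $\sigma_j := \TOM{A}_{jk}(\rho) \in \setofQS^\leq(\HH_2)$ and $\sum_j \sigma_j \in \setofQS^\leq(\HH_2)$. Positivity is clear since each $\Psi_j$ is completely positive hence positive, and the $\sigma_j$ are positive; so the result lies in $\mathcal{P}(\HH_3)$. For the trace bound, I would use that $\Psi_j$ is the column sum of the sub-TOM $\TOM{B}$, which by definition is trace non-increasing on states, and more generally $\tr \Psi_j(\sigma) \leq \tr \sigma$ for any positive $\sigma$ by linearity and the normalisation of $\sigma$. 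Then
\begin{equation*}
\tr\Big(\sum_{j} \Psi_j(\sigma_j)\Big)
= \sum_{j} \tr \Psi_j(\sigma_j)
\leq \sum_{j} \tr \sigma_j
= \tr\Big(\sum_j \sigma_j\Big) \leq 1,
\end{equation*}
where the final inequality uses $\sum_j \sigma_j \in \setofQS^\leq(\HH_2)$. This establishes the column-sum lands in $\setofQS^\leq(\HH_3)$, completing the proof. The one step deserving care---the main obstacle---is justifying $\tr \Psi_j(\sigma) \leq \tr\sigma$ for a general sub-normalised (not necessarily unit-trace) $\sigma$: the sub-TOM definition only asserts trace non-increase on normalised states, so I would extend it to all positive operators by scaling, noting that $\Psi_j$ is linear and that trace non-increase on the unit-trace state $\sigma/\tr\sigma$ yields the bound for $\sigma$ itself (handling $\tr\sigma = 0$ trivially). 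I would also remark that this argument is the natural specialisation of Theorem~\ref{th:toms-product}, with the equalities in the trace computation replaced by inequalities.
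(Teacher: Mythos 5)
Your proof is correct and follows essentially the same route as the paper's: complete positivity of the entries follows from closure of CP maps under composition and sums, and the column-sum trace bound is obtained by the same rescaling trick ($\tilde\sigma = \sigma/\tr\sigma$ when $\tr\sigma > 0$, with the zero-trace case handled trivially) that extends trace non-increase of sub-TOM column sums from normalised to sub-normalised inputs. If anything, your write-up is the more precise one: the paper applies the single column sum $\sum_i \TOM{B}_{ij}$ to the aggregate $\sigma = \sum_l \TOM{A}_{lj}(\rho)$, glossing over the fact that the column sum of $\TOM{C}$ is really $\sum_j \Psi_j\bigl(\TOM{A}_{jk}(\rho)\bigr)$ with a \emph{different} map $\Psi_j := \sum_i \TOM{B}_{ij}$ acting on each term, which is exactly the per-$j$ bookkeeping you carry out before summing the bounds $\tr \Psi_j(\sigma_j) \le \tr \sigma_j$.
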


\begin{proof}[Lemma \ref{lem:prod-sub-toms}]
According to proof of Lemma 2.2 in \cite{gudder2008quantum},
$\TOM{C}_{ij}=\TOM{B}_{ij}\TOM{A}_{ij}$ is a completely positive map.
For every $\rho \in \setofQS(\HH_1)$ and $j$ we have that 
$\sigma=\sum_{i=1}^M\TOM{A}_{ij}(\rho)\in\setofQS^\leq(\HH_2)$. 
If $\tr(\sigma)>0$ then  $\tilde{\sigma}=\sigma/\tr(\sigma)\in\setofQS(\HH_2)$ and
\begin{equation}
\begin{split}
\tr\Big(\sum_{i=1}^M \TOM{B}_{ij} (\sigma) \Big)
=&\tr\Big(\tr(\sigma)\sum_{i=1}^M \TOM{B}_{ij} (\tilde{\sigma}) \Big)\\
=&\tr(\sigma) \tr\Big(\sum_{i=1}^M \TOM{B}_{ij} (\tilde{\sigma}) \Big) \leq 1.
\end{split}
\end{equation}
In the case where $\tr(\sigma)=0$, the $\sigma$ is the zero operator and 
$\sum_{i=1}^M \TOM{B}_{ij} (\sigma)$ is also the zero operator. 
Thus $\tr\Big(\sum_{i=1}^M \TOM{B}_{ij} (\sigma) \Big)=0$.
Hence, $\sum_{i=1}^M\TOM{C}_{i,j}(\rho)\in\setofQS^\leq(\HH_3)$ and 
$\TOM{C} \in \Gamma_\leq^{M,K}(\HH_1,\HH_3)$.
\end{proof}

Product of (sub-)TOMs that have same dimensions is associative i.e.
$(\TOM{E}\TOM{F})\TOM{G}=\TOM{E}(\TOM{F}\TOM{G})$ and
$(\TOM{E}\TOM{F})(\alpha)=\TOM{E}(\TOM{F}(\alpha))$.

\section{Mealy Quantum Hidden Markov Model}\label{sec:qhmm}
In order to explain the idea of QHMM we can form following analogy. A QHMM might
by understood as a system consisting of a particle that has an internal
sub-normalised quantum state $\rho\in\setofQS_\leq(\HH)$ and it occupies a
classical state $S_i$. This particle hops from one classical state $S_i$ into
another state $S_j$ passing trough a quantum operation associated with a sub-TOM
element $\TOMel{P}^{V_k}_{S_j, S_i}$. With each transition a symbol $V_k$ is
emitted from the system.

We will now define the classical and quantum version of the Mealy Hidden Markov Model.

\begin{definition}[Finite sequences]
	Let $$\mathcal{V}^T=\underbrace{\mathcal{V}\times \mathcal{V}\times \ldots\times \mathcal{V}}_T$$ 
    defines the set of sequences of length $T$ over alphabet $\mathcal{V}$.
\end{definition}
\begin{definition}[Mealy  Hidden Markov Model]
Let $\mathcal{S} =  \{S_1, \ldots, S_N\}$ and $\mathcal{V} = \{V_1, \ldots,
V_M\}$ be a set of states and an alphabet respectively. The Mealy HMM is
specified by a tuple
$\lambda=(\mathcal{S},\mathcal{V}, \Pi, \pi)$, where:
\begin{itemize}
	\item  $\pi \in [0,1]^{N}$  is a stochastic vector representing initial states, where $\pi_i$ is the
    probability that the initial state is $S_i$;
	\item $\Pi$ is a mapping $\mathcal{V} \ni V_i \mapsto \Pi^{V_i} \in  \mathbb{R}^{N,N}$, 
    where $\Pi^{V_i}$ is sub-stochastic matrix, such that
${\Pi}^{\Sigma}:=\sum\limits_{i=1}^{M} \Pi^{V_i}\in \mathbb{R}^{N,N}$
is stochastic matrix 
and $\Pi^{V_i}_{j,k}$ is $p(n_{t+1}= S_k, o_{t+1} = V_i |
n_t = S_j)$, that is probability of going from state $j$ to $k$ while
generating the output $V_i$.
\end{itemize}
\end{definition}
Let $O = o_1 o_2, \ldots o_T \in \mathcal{V}^T$ be a sequence of length $T$ and $P : \mathcal{V}^T \to [0,1]$ be string probabilities, defined as $P(O) = p(O(1) =
o_1, O(2) = o_2, \ldots, O(T) = o_T)$. The concatenation of
string $O$ and $o_{T+1}$ is denoted by $Oo_{T+1}$.

It is well known that for HMMs the function $P$ satisfies 
\begin{itemize}
	\item $\sum_{O\in\mathcal{V}^T}P(O)=1$ and 
	\item $\sum_{o_{T+1}\in \mathcal{V}} P(O o_{T+1}) = P(O)$, which follows from the law of total probability.
\end{itemize}

The string probabilities generated by Mealy HMM
$\lambda = (\mathcal{S},\mathcal{V},\Pi,\pi)$ are given by
$$P(O|\lambda) =  \sum\limits_{i=1}^{N} \alpha_i,$$
where $\alpha_i$ is $i$-th element of $\alpha=\Pi^{o_T} \Pi^{o_{T-1}} \ldots \Pi^{o_1} \pi$.

\begin{definition}[Mealy Quantum Hidden Markov Model]
Let $\mathcal{S}$ and $\mathcal{V}$ be a set of states and an alphabet
respectively. Mealy QHMM is specified by a tuple
$\lambda=(\mathcal{S},\mathcal{V},\TOM{P},\pi)$, where:
\begin{itemize}
	\item $\pi\in \Delta^N(\HH)$ is an initial vector state;
	\item $\TOM{P}$ is a mapping $\mathcal{V} \to \Gamma_\leq^{N,N}(\HH,\HH)$ such that
$\TOM{P}^S:=\sum\limits_{V_i\in\mathcal{V}} \TOM{P}^{V_i}\in\Gamma^{N,N}(\HH,\HH)$
is a TOM, with $\TOM{P}^{V_i}$ being value of $\TOM{P}$ for $V_i$.
\end{itemize}
\end{definition}
As an example we give a three-state two-symbol Mealy QHMM
$\lambda=(\mathcal{S},\mathcal{V}, \Pi, \pi)$, with
    \begin{equation*}
    \begin{split}
    \mathcal{S}& = \{S_1,S_2,S_3\},\\
    \mathcal{V}& = \{V_1,V_2\},\\
    \Pi &= \left\{
        V_1\mapsto\TOM{P}^{V_1},
        V_2\mapsto\TOM{P}^{V_2}
        \right\},\\
        \pi &=
        \begin{bmatrix}
        \pi_{S_1} \\
        \pi_{S_2} \\
        \pi_{S_3}
        \end{bmatrix},\\
    \TOM{P}^{V_1}&=
    \begin{bmatrix}
    \TOMel{P}_{S_1 S_1}^{V_1} & \TOMel{P}_{S_1 S_2}^{V_1} & \TOMel{P}_{S_1 S_3}^{V_1} \\
    \TOMel{P}_{S_2 S_1}^{V_1} & \TOMel{P}_{S_2 S_2}^{V_1} & \TOMel{P}_{S_2 S_3}^{V_1} \\
    \TOMel{P}_{S_3 S_1}^{V_1} & \TOMel{P}_{S_3 S_2}^{V_1} & \TOMel{P}_{S_3 S_3}^{V_1}
    \end{bmatrix},\\
    \TOM{P}^{V_2}&=
    \begin{bmatrix}
    \TOMel{P}_{S_1 S_1}^{V_2} & \TOMel{P}_{S_1 S_2}^{V_2} & \TOMel{P}_{S_1 S_3}^{V_2} \\
    \TOMel{P}_{S_2 S_1}^{V_2} & \TOMel{P}_{S_2 S_2}^{V_2} & \TOMel{P}_{S_2 S_3}^{V_2} \\
    \TOMel{P}_{S_3 S_1}^{V_2} & \TOMel{P}_{S_3 S_2}^{V_2} & \TOMel{P}_{S_3 S_3}^{V_2}
    \end{bmatrix}.
    \end{split}
    \end{equation*}
A graphical representation of this QHMM is presented
in Fig.~\ref{fig:meallyrepresentation}.

\begin{figure}[ht!]
\begin{center}
    \small
    \begin{tikzpicture}[->,>=stealth',shorten >=1pt,auto,node distance=2.8cm,
    semithick]
    \tikzstyle{every state}=[fill=white,draw=black,thick,text=black,scale=1]
    \node[state]         (S1) at (0,0) {$S_1$};
    \node[state]         (S2) at (3.5,0) {$S_2$};
    \node[state]         (S3) at (1.75,-3.5) {$S_3$};
    \path (S1) edge  [loop above] node[above]
    {$\TOMel{P}_{S_1 S_1}^{V_1}|V_1$} (S1);
    \path (S1) edge  [loop left]  node[left]
    {$\TOMel{P}_{S_1 S_1}^{V_2}|V_2$} (S1);
    \path (S1) edge  [bend right=10] node[sloped,above]
    {$\TOMel{P}_{S_2 S_1}^{V_1}|V_1$} (S2);
    \path (S1) edge  [bend right=40] node[sloped,above]
    {$\TOMel{P}_{S_2 S_1}^{V_2}|V_2$} (S2);
    \path (S1) edge  [bend right=50] node[sloped,above]
    {$\TOMel{P}_{S_3 S_1}^{V_1}|V_1$} (S3);
    \path (S1) edge  [bend right=70] node[sloped,below]
    {$\TOMel{P}_{S_3 S_1}^{V_2}|V_2$} (S3);
    \path (S2) edge  [bend right=50] node[sloped,above]
    {$\TOMel{P}_{S_1 S_2}^{V_1}|V_1$} (S1);
    \path (S2) edge  [bend right=19] node[sloped,above]
    {$\TOMel{P}_{S_1 S_2}^{V_2}|V_2$} (S1);
    \path (S2) edge  [loop above] node[above]
    {$\TOMel{P}_{S_2 S_2}^{V_1}|V_1$} (S2);
    \path (S2) edge  [loop right] node[right]
    {$\TOMel{P}_{S_2 S_2}^{V_2}|V_2$} (S2);
    \path (S2) edge  [bend left=50] node[sloped,above]
    {$\TOMel{P}_{S_3 S_2}^{V_1}|V_1$} (S3);
    \path (S2) edge  [bend left=70] node[sloped,below]
    {$\TOMel{P}_{S_3 S_2}^{V_2}|V_2$} (S3);
    \path (S3) edge  [bend left=20] node[sloped,above]
    {$\TOMel{P}_{S_1 S_3}^{V_1}|V_1$} (S1);
    \path (S3) edge  [bend right=10] node[sloped,above]
    {$\TOMel{P}_{S_1 S_3}^{V_2}|V_2$} (S1);
    \path (S3) edge  [bend left=7] node[sloped,above]
    {$\TOMel{P}_{S_2 S_3}^{V_1}|V_1$} (S2);
    \path (S3) edge  [bend right=20] node[sloped,above]
    {$\TOMel{P}_{S_2 S_3}^{V_2}|V_2$} (S2);
    \path (S3) edge  [out=180+30,in=180+60,looseness=12.] node[left]
    {$\TOMel{P}_{S_3 S_3}^{V_1}|V_1$} (S3);
    \path (S3) edge  [out=-30,in=-60,looseness=12] node[right]
    {$\TOMel{P}_{S_3 S_3}^{V_2}|V_2$} (S3);
    \end{tikzpicture}
\end{center}
\caption{Graphical representation of three-state Mealy QHMM $\lambda$, whose
alphabet consists of two symbols $V_1, V_2$. The symbol $\TOMel{P}_{S_2
S_3}^{V_1}|V_1$ should be understood in the following way: when QHMM is in state
$S_3$ and is being transformed to state $S_2$ while emitting symbol $V_1$, then
the internal quantum sub-state is transformed by quantum operation
$\TOMel{P}_{S_2 S_3}^{V_1}|V_1$.}
\label{fig:meallyrepresentation}
\end{figure}
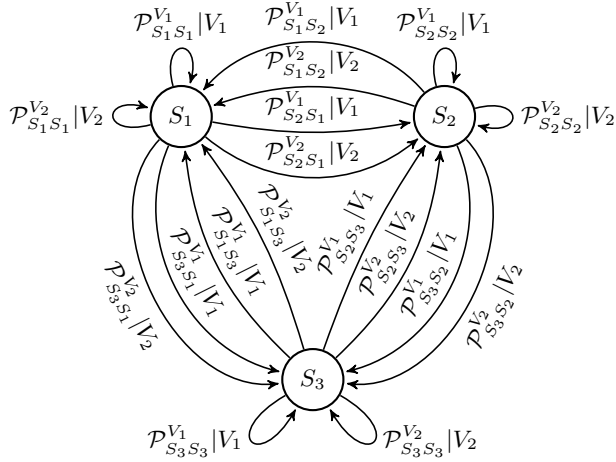

\begin{remark}
For $\dim \HH=1$ QHMM reduces to classical HMM. In this case TOMs reduce to
stochastic matrices, sub-TOMs to sub-stochastic matrices, the
vector states to probability vectors, sub-vector states to sub-normalised
probability vectors.
\end{remark}

\subsection{Forward algorithm for Mealy QHMM}

With each Mealy QHMM we can associate a mapping
$$\varrho : \mathcal{V}^*\to \setofQS_\leq(\HH),$$
where $\mathcal{V}^*= \bigcup_{T=1}^{\infty} \mathcal{V}^T$. Given a sequence $O=(o_1,o_2,\ldots,o_T)$ and a Mealy QHMM $\lambda$ one can compute
resulting sub-normalised quantum state $\rho_{O|\lambda}$.

Let us consider sub-normalised vector states
$$ \alpha_T  = \left[ \alpha_{T,1}, \ldots, \alpha_{T,N}\right]^T  \in \Delta_\leq^N(\HH) $$
such that
\begin{equation}
\label{eq:forward}
\alpha_T=\TOM{P}^{o_T}\ldots\TOM{P}^{o_2}\TOM{P}^{o_1}(\pi),
\end{equation}
then $\rho_{O|\lambda}:=\varrho(O)=\sum\limits_{i=1}^{N}\alpha_{T,i}$.

Equation \eqref{eq:forward} we call the Forward algorithm for QHMMs. Note that
the result of this algorithm is a sub-normalised quantum state
$\rho_{O|\lambda}\in \setofQS_\leq(\HH)$. The sum of all those states over all
possible sequences of a given length forms a quantum state, as formulated in the
following theorem.

\begin{theorem}{}\label{prop:sum_of_vector_states_for_symbol}
	 For any QHMM $\lambda$ we have
    $\sum\limits_{O\in \mathcal{V}^T} \rho_{O|\lambda}\in\setofQS(\HH)$.
\end{theorem}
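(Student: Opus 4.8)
The plan is to collapse the sum over the exponentially many length-$T$ sequences into a single $T$-fold application of the full TOM $\TOM{P}^S$, after which the statement follows immediately from the theorems of Gudder quoted above.

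First I would unfold the left-hand side using the definition $\rho_{O|\lambda}=\sum_{i=1}^N\alpha_{T,i}$ together with the Forward recursion~\eqref{eq:forward}, so that
\begin{equation*}
\sum_{O\in\mathcal{V}^T}\rho_{O|\lambda}
=\sum_{(o_1,\ldots,o_T)\in\mathcal{V}^T}\ \sum_{i=1}^{N}\Big(\TOM{P}^{o_T}\cdots\TOM{P}^{o_1}(\pi)\Big)_{i}.
\end{equation*}
The decisive step is then to exploit that each $\TOM{P}^{o_t}$ acts \emph{linearly} on (sub-)vector states: this lets me interchange the finite sums and push the summation over each symbol $o_t$ inside the composition, giving
\begin{equation*}
\sum_{(o_1,\ldots,o_T)\in\mathcal{V}^T}\TOM{P}^{o_T}\cdots\TOM{P}^{o_1}(\pi)
=\Big(\sum_{o\in\mathcal{V}}\TOM{P}^{o}\Big)^{T}(\pi)
=\big(\TOM{P}^S\big)^{T}(\pi),
\end{equation*}
where the last equality uses the defining property $\TOM{P}^S=\sum_{V_i\in\mathcal{V}}\TOM{P}^{V_i}$ of a Mealy QHMM.

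Once the sum is written as $\big(\TOM{P}^S\big)^{T}(\pi)$, I would invoke Theorem~\ref{th:toms-product} (inductively, $T-1$ times) to conclude that the $T$-fold product $\big(\TOM{P}^S\big)^{T}$ is again a TOM in $\Gamma^{N,N}(\HH,\HH)$, since $\TOM{P}^S$ is a genuine (full) TOM. Applying Theorem~\ref{thm:vs-tom-vs} to the initial vector state $\pi\in\Delta^N(\HH)$ then shows that $\beta:=\big(\TOM{P}^S\big)^{T}(\pi)\in\Delta^N(\HH)$ is itself a vector state. By the very definition of a vector state, the sum of its components $\sum_{i=1}^N\beta_i$ lies in $\setofQS(\HH)$; since this sum is precisely $\sum_{O\in\mathcal{V}^T}\rho_{O|\lambda}$, the claim follows.

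The one point that genuinely requires care---the ``hard part''---is justifying the factorization in the second display: one must check that composing the maps $\TOM{P}^{o_t}$ commutes with summing over the emitted symbols, so that the sum over $\mathcal{V}^T$ really does reduce to a power of $\TOM{P}^S$. This is immediate from the linearity of completely positive maps and the bilinearity of (sub-)TOM multiplication, but it should be stated explicitly, because it is exactly the mechanism by which the ``missing'' sub-normalisation of the individual $\TOM{P}^{V_i}$ is restored to full normalisation in the aggregate.
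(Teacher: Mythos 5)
Your proposal is correct and takes essentially the same route as the paper: the paper's Lemma~\ref{lem:sum-vs} establishes, by induction on $T$, exactly your factorization $\sum_{O\in \mathcal{V}^T}\TOM{P}^{o_T}\cdots\TOM{P}^{o_1}(\pi)=\big(\TOM{P}^S\big)^T(\pi)$ (peeling off one factor $\TOM{P}^S=\sum_{o\in\mathcal{V}}\TOM{P}^{o}$ per step), and then concludes via Theorem~\ref{th:toms-product} and Theorem~\ref{thm:vs-tom-vs} just as you do. Your closed-form statement of the sum as a $T$-th power, with the distributivity step made explicit, is a slightly cleaner write-up of the same mechanism, including the final appeal to the definition of a vector state to pass from $\Delta^N(\HH)$ to $\setofQS(\HH)$.
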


In order to prove this theorem we will first prove the following lemma.
\begin{lemma}
    \label{lem:sum-vs}
	For any QHMM $\lambda$ the following holds
	$$\sum_{O\in\mathcal{V}^T} \TOM{P}^{o_T}\TOM{P}^{o_{T-1}}\ldots\TOM{P}^{o_{1}}(\pi)\in\Delta^N(\HH).$$
\end{lemma}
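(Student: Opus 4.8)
The plan is to collapse the sum over all length-$T$ sequences into a single power of the full TOM $\TOM{P}^S$ and then to invoke the two theorems of Gudder already recorded above. The key observation is that each summand $\TOM{P}^{o_T}\TOM{P}^{o_{T-1}}\ldots\TOM{P}^{o_1}$ is an ordered product of matrices whose entries are completely positive maps, and such matrix multiplication is bilinear, hence distributes over addition. Summing each index $o_t$ independently over $\mathcal{V}$ therefore factorizes the sum:
\[
\sum_{O\in\mathcal{V}^T}\TOM{P}^{o_T}\TOM{P}^{o_{T-1}}\ldots\TOM{P}^{o_1}
=\Big(\sum_{o_T\in\mathcal{V}}\TOM{P}^{o_T}\Big)\ldots\Big(\sum_{o_1\in\mathcal{V}}\TOM{P}^{o_1}\Big)
=\big(\TOM{P}^S\big)^T,
\]
where each factor equals $\TOM{P}^S=\sum_{V_i\in\mathcal{V}}\TOM{P}^{V_i}$ by the definition of a Mealy QHMM.

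Once this factorization is in hand the remainder is immediate. By definition $\TOM{P}^S\in\Gamma^{N,N}(\HH,\HH)$ is a genuine TOM, so a short induction on $T$ using Theorem~\ref{th:toms-product} shows that $\big(\TOM{P}^S\big)^T$ is again a TOM in $\Gamma^{N,N}(\HH,\HH)$. Using linearity to interchange the application of $\pi$ with the sum, $\sum_{O}\big(\TOM{P}^{o_T}\ldots\TOM{P}^{o_1}\big)(\pi)=\big(\TOM{P}^S\big)^T(\pi)$, I would then conclude by Theorem~\ref{thm:vs-tom-vs}: applying a TOM to the vector state $\pi\in\Delta^N(\HH)$ yields a vector state in $\Delta^N(\HH)$, which is exactly the claim.

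The only delicate point, and the step I would write out most carefully, is the factorization itself: one must verify that the distributive law is applied correctly when the factors are matrices of non-commuting completely positive maps, so that the ordering in $\TOM{P}^{o_T}\ldots\TOM{P}^{o_1}$ is preserved and each symbol index is collected independently. After that, no estimates or positivity computations are required, since everything is inherited from the cited theorems.
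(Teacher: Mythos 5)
Your proposal is correct and is essentially the argument the paper gives: the paper's inductive proof peels off one symbol at a time to establish exactly your factorization $\sum_{O\in\mathcal{V}^T}\TOM{P}^{o_T}\cdots\TOM{P}^{o_1}=\big(\TOM{P}^S\big)^T$, invoking the same two ingredients (Theorem~\ref{th:toms-product} for products of TOMs and Theorem~\ref{thm:vs-tom-vs} for applying a TOM to $\pi$). Your version merely separates the distributivity step from the induction on powers, which, if anything, states more cleanly the operator-level identity that the paper's inductive hypothesis uses implicitly.
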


\begin{proof}{Lemma \ref{lem:sum-vs}}
	We will proceed by induction.
	For case $T = 1$ we have
	$$\beta_1=\sum_{o\in\mathcal{V}}\TOM{P}^o(\pi) = \TOM{P}^S(\pi) \in \Delta^N(\HH).$$
	For case $T = n+1$
	\begin{equation}
		\begin{split}
		\beta_T = & \sum_{O\in\mathcal{V}^{N+1}}\TOM{P}^{o_{N+1}}\TOM{P}^{o_{N}}\ldots\TOM{P}^{o_{1}}(\pi)  = \\
		& \sum_{o_{n+1}\in\mathcal{V}}\TOM{P}^{o_{n+1}}
		\sum_{O\in\mathcal{V}^{N}}\TOM{P}^{o_{N}}\TOM{P}^{o_{N-1}}\ldots\TOM{P}^{o_{1}}(\pi) = \\
		&  \TOM{P}^S
		\underbrace{\sum_{O\in\mathcal{V}^{N}}\TOM{P}^{o_{N}}\TOM{P}^{o_{N-1}}\ldots\TOM{P}^{o_{1}}}_{\TOM{X}}(\pi).
		\end{split}
	\end{equation}
	By inductive hypothesis $\TOM{X}$ is a TOM. $\TOM{P}^S$ is a TOM, therefore $\beta_T\in \Delta^N(\HH).$
\end{proof}

\begin{proof}{Theorem \ref{prop:sum_of_vector_states_for_symbol}}
	\begin{equation}
	\begin{split}
		\sum\limits_{O\in \mathcal{V}^T} \rho_{O|\lambda}  = &  \\
		 = &\sum\limits_{O\in \mathcal{V}^T} \sum\limits_{i=1}^{N}\alpha_{T,i} \\
		 = &\sum\limits_{O\in \mathcal{V}^T} \sum\limits_{i=1}^{N} \left[\TOM{P}^{o_T}\ldots\TOM{P}^{o_2}\TOM{P}^{o_1}(\pi)\right]_i \\
		 = &\sum\limits_{i=1}^{N} \Big[\underbrace{\sum\limits_{O\in \mathcal{V}^T} \TOM{P}^{o_T}\ldots\TOM{P}^{o_2}\TOM{P}^{o_1}}_{\TOM{X}}(\pi)\Big]_i \\
		 = &\sum\limits_{i=1}^{N} \left[\TOM{X}(\pi)\right]_i \\
	\end{split}
	\end{equation}
	Since by Lemma \ref{lem:sum-vs} $\TOM{X}$ is a TOM, therefore $\sum\limits_{O\in \mathcal{V}^T} \rho_{O|\lambda}\in \setofQS(\HH).$ 
\end{proof}

\begin{theorem}{}\label{trm:prob-concat}
Let $O=(o_1,o_2,...,o_T)$ be a sequence of length $T$ and $Oo_{T+1}$ be a concatenation of $O$ and $o_{T+1}$, then for any QHMM $\lambda$ the following holds
\begin{equation}
\sum_{o_{T+1}\in \mathcal{V}} \rho_{O o_{T+1}|\lambda} = \rho_{O|\lambda}.
\end{equation}
\end{theorem}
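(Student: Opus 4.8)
The plan is to unfold both sides through the Forward recursion~\eqref{eq:forward}, collapse the sum over the final symbol into the full TOM $\TOM{P}^S$, and thereby reduce the stated operator equality to a single marginalization identity for $\TOM{P}^S$ that I would then discharge using the defining channel structure of a TOM.

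First I would fix the prefix $O=(o_1,\ldots,o_T)$ and write $\alpha_T=\TOM{P}^{o_T}\cdots\TOM{P}^{o_1}(\pi)\in\Delta_\leq^N(\HH)$, so that by definition $\rho_{O|\lambda}=\sum_{j=1}^N\alpha_{T,j}$. Applying the Forward recursion to the concatenation $Oo_{T+1}$ amounts, by the associativity noted after Lemma~\ref{lem:prod-sub-toms}, to one further application of $\TOM{P}^{o_{T+1}}$, namely $\alpha_{T+1}=\TOM{P}^{o_{T+1}}(\alpha_T)$, whence $\rho_{Oo_{T+1}|\lambda}=\sum_{i=1}^N\big[\TOM{P}^{o_{T+1}}(\alpha_T)\big]_i$. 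Note that $\alpha_T$ is a fixed sub-normalised vector state that does not depend on the trailing symbol $o_{T+1}$.

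Next I would sum over the last symbol. Since each $\TOM{P}^{o_{T+1}}$ acts linearly and $\alpha_T$ is independent of $o_{T+1}$, the symbol-sum passes through the matrix action, and the definition $\TOM{P}^S:=\sum_{o_{T+1}\in\mathcal{V}}\TOM{P}^{o_{T+1}}$ gives
\begin{equation*}
\sum_{o_{T+1}\in\mathcal{V}}\rho_{Oo_{T+1}|\lambda}
=\sum_{i=1}^N\Big[\Big(\sum_{o_{T+1}\in\mathcal{V}}\TOM{P}^{o_{T+1}}\Big)(\alpha_T)\Big]_i
=\sum_{i=1}^N\big[\TOM{P}^S(\alpha_T)\big]_i .
\end{equation*}
This reduces the theorem to the marginalization identity $\sum_{i=1}^N[\TOM{P}^S(\alpha_T)]_i=\sum_{j=1}^N\alpha_{T,j}$, the right-hand side being exactly $\rho_{O|\lambda}$.

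The crux is this last identity, and I expect it to be the main obstacle. Expanding the matrix action via Theorem~\ref{thm:vs-tom-vs} as $[\TOM{P}^S(\alpha_T)]_i=\sum_{j}\TOM{P}^S_{ij}(\alpha_{T,j})$ and exchanging the two finite sums yields $\sum_{i}[\TOM{P}^S(\alpha_T)]_i=\sum_{j}\big(\sum_i\TOM{P}^S_{ij}\big)(\alpha_{T,j})$, so everything hinges on the column operators $\sum_i\TOM{P}^S_{ij}$ acting on each $\alpha_{T,j}$. Here I would invoke the defining property of the full TOM recalled in Section~\ref{sec:tom}: for every column $j$ the map $\sum_i\TOM{P}^S_{ij}$ is a quantum channel. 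To convert this column-wise structure into the term-by-term operator equality I would mirror the normalisation argument of the proof of Lemma~\ref{lem:prod-sub-toms}, handling each $\alpha_{T,j}$ of positive trace through its normalised state $\tilde\alpha_{T,j}=\alpha_{T,j}/\tr(\alpha_{T,j})$ and treating the zero operator separately, and then reassembling $\sum_{j}\big(\sum_i\TOM{P}^S_{ij}\big)(\alpha_{T,j})$ back to $\sum_{j}\alpha_{T,j}=\rho_{O|\lambda}$. I regard controlling this reassembly at the level of operators, rather than merely their traces, as the single delicate point of the argument, and the place where the channel property of the column sums of $\TOM{P}^S$ must be used in full.
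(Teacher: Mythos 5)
Your reduction follows the paper's own route step for step: unfold both sides through the Forward recursion \eqref{eq:forward}, pass the sum over $o_{T+1}$ through the matrix action by linearity to obtain $\TOM{P}^S$, exchange the finite sums via Theorem~\ref{thm:vs-tom-vs}, and arrive at $\sum_{o_{T+1}}\rho_{Oo_{T+1}|\lambda}=\sum_{j}\Phi_j(\alpha_{T,j})$ with $\Phi_j:=\sum_i\TOM{P}^S_{ij}$ a quantum channel for every column $j$. The paper's proof is the same computation in different notation (its $\TOM{P}^{o_{t+1}}_i$ is the $i$-th column map), and it discharges the final step by citing Gudder's law of total probability to write $\sum_{o_{t+1}}\TOM{P}^{o_{t+1}}_i(\alpha_{T,i})=\alpha_{T,i}$.

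But the step you rightly single out as the delicate point---upgrading the channel property of the $\Phi_j$ to the term-by-term operator identity $\sum_j\Phi_j(\alpha_{T,j})=\sum_j\alpha_{T,j}$---cannot be closed by your plan, or by any plan: as an operator equality it is false in general. The normalisation argument of Lemma~\ref{lem:prod-sub-toms} controls only traces; trace preservation gives $\tr\Phi_j(\alpha_{T,j})=\tr\alpha_{T,j}$ and says nothing about the operators themselves (already $N=1$ with $\TOM{P}^S_{11}$ a nontrivial unitary channel is a counterexample, since $U\rho U^\dagger\neq\rho$). Indeed the paper's own Example~1 violates the stated identity: for $O=(a)$ one has $\rho_{a|\lambda_1^q}=\ketbra{0}{0}$, yet $\sum_{o_2\in\{a,b,c\}}\rho_{ao_2|\lambda_1^q}=\frac12\ketbra{0}{0}+\frac12 U\ketbra{0}{0}U^\dagger=\frac12\ketbra{0}{0}+\frac12\ketbra{1}{1}\neq\ketbra{0}{0}$, although the traces agree. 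What your computation genuinely proves is the trace (probability) identity $\tr\sum_{o_{T+1}}\rho_{Oo_{T+1}|\lambda}=\tr\rho_{O|\lambda}$---the correct quantum analogue of classical marginalization, and all that the cited ``law of total probability for TOMs'' can support, since it too concerns probabilities rather than operators. So your attempt reproduces the paper's argument and honestly locates exactly the step at which the paper's proof is likewise valid only at the level of traces; to finish rigorously, restate Theorem~\ref{trm:prob-concat} with $\tr$ applied to both sides (equivalently, as equality of outcomes of the trivial existence measurement), after which your derivation closes immediately from the trace-preserving property of the column sums of $\TOM{P}^S$.
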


\begin{proof}{Theorem \ref{trm:prob-concat}}

According to law of total probability for TOMs \cite{gudder2008quantum} we get
\begin{equation}
\begin{split}
\sum_{o_{t+1}\in \mathcal{V}} \rho_{O o_{t+1}|\lambda} =& \sum_{o_{t+1}\in \mathcal{V}} \sum_{i=1}^N \TOM{P}^{o_{t+1}}_i (\alpha_{T,i})\\
=&\sum_{i=1}^N \sum_{o_{t+1}\in \mathcal{V}} \TOM{P}^{o_{t+1}}_i (\alpha_{T,i})\\
=&\sum_{i=1}^N  \alpha_{T,i} = \rho_{O|\lambda}.
\end{split}
\end{equation}
\end{proof}

\subsection{Viterbi algorithm for Mealy QHMM}
We are given a QHMM $\lambda$ with set of states $\mathcal{S} = \{S_1,
S_2,\dots,S_{N}\}$ and an alphabet of symbols $\mathcal{V} = \{V_1,
V_2, \dots,V_{M}\}$. We will denote
$\TOMel{P}_{ij}^{k}:=\TOM{P}^{V_k}_{S_i S_j}$.

We have a sequence of length $T$, $O=\left(o_1,o_2,\ldots,o_T\right)$, of
symbols from alphabet $\mathcal{V}$, $o_i\in\mathcal{V}$. 

A Mealy QHMM emits symbols on transition from one state to the next. For our
sequence $O$ we index corresponding QHMM states by $n_i$, i.e. $n_0$ is the
initial state (before the emission of the first symbol), and $n_i, i\geq1$ is
the state after emission of the symbol $o_i$. $n_i\in \mathcal{S}$.

The goal of the algorithm is to find most likely sequence of states conditioned
on a sequence of emitted symbols $O$.

We denote the set of partial sequences of state indexes as
$N_k=\{\left(n_0, n_1, \dots, n_k\right):n_j\in\mathcal{S}, j=0, 1, \dots, k\}$,
where $k \leq T$. A set beginning with $n_0$ and ending after $k$ steps with
$S_i$ we denote $N_k^{S_i}=\{\left(n_0, n_1, \dots, n_{k-1},
n_k=S_i\right):n_j\in\mathcal{S}, j=0, 1, \dots, k\}\subset N_k$.

\begin{theorem}{}\label{th:theorem}
Let $O$ be a given sequence of emissions from $\mathcal{V}$. 
Let $\lambda=(\mathcal{S},\mathcal{V},\TOM{P},\pi)$ be a Mealy QHMM satisfying
\begin{equation}\label{th:condition}
\begin{split}
&\forall_{n_i,n_j\in \mathcal{S}, o \in O}
\forall_{\alpha, \beta \in \setofQS_\leq(\HH)} \tr \alpha > \tr \beta \implies\\
&\implies \tr \TOM{P}^o_{n_i,n_j}(\alpha) > \tr \TOM{P}^o_{n_i,n_j}(\beta)
\end{split}
\end{equation}
i.e. all sub-TOMs elements are trace-monotonicity preserving quantum operations.

We define $w\in N_k^{S_i}$ to be a sequence of $k$ states ending with $S_i$. A
sub-normalised state associated with $w$ and sequence $O$ is $B_w \in
\setofQS_\leq(\HH)$ defined as $B_w = \TOM{P}^{o_k}_{n_{k}, n_{k-1}}
	\TOM{P}^{o_{k-1}}_{n_{k-1}, n_{k-2}}
	\ldots \TOM{P}^{o_1}_{n_{1}, n_{0}} (
	\pi_{n_0})$.
The sub-normalised state that maximizes trace over set of all $B_w$s with $w \in N_k^{S_i}$ is
\begin{equation}
A_{k, S_i} = \mathop{\mathrm{argmax}}_{ \left\{B_w:w \in N_k^{S_i} \right\} } \tr B_w.
\end{equation}

Then the following holds
\begin{equation}
\tr A_{k, S_i}
=
\max_{n_{k-1} \in \mathcal{S}}
\tr \TOM{P}^{o_k}_{ n_k=S_i, n_{k-1}}( A_{k-1, n_{k-1}}).
\end{equation}
\end{theorem}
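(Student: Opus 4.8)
The plan is to prove the recursion directly, without induction, by exploiting the recursive structure of the states $B_w$ together with the trace-monotonicity hypothesis \eqref{th:condition}. The key observation is that every path $w=(n_0,\ldots,n_{k-1},n_k=S_i)\in N_k^{S_i}$ splits uniquely into its prefix $w'=(n_0,\ldots,n_{k-1})\in N_{k-1}^{n_{k-1}}$ and its final transition from $n_{k-1}$ to $S_i$ emitting $o_k$. By the definition of $B_w$ this splitting gives
\begin{equation}
B_w=\TOM{P}^{o_k}_{S_i,n_{k-1}}(B_{w'}),
\end{equation}
so that computing $B_w$ amounts to applying one further operation to the state associated with the prefix.

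First I would rewrite the maximisation over $N_k^{S_i}$ as a nested maximisation. Partitioning $N_k^{S_i}$ according to the value of the penultimate state $n_{k-1}$, and using that for fixed $n_{k-1}$ the prefixes $w'$ range exactly over $N_{k-1}^{n_{k-1}}$, I obtain
\begin{equation}
\tr A_{k,S_i}
=\max_{w\in N_k^{S_i}}\tr B_w
=\max_{n_{k-1}\in\mathcal{S}}\;\max_{w'\in N_{k-1}^{n_{k-1}}}\tr\TOM{P}^{o_k}_{S_i,n_{k-1}}(B_{w'}).
\end{equation}
It then remains to show that, for each fixed $n_{k-1}$, the inner maximum equals $\tr\TOM{P}^{o_k}_{S_i,n_{k-1}}(A_{k-1,n_{k-1}})$; substituting this identity into the display above yields the claim.

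The heart of the argument, and the step where the hypothesis \eqref{th:condition} is indispensable, is this inner claim: applying a trace-monotone operation commutes with taking the trace-maximiser. Write $\Phi=\TOM{P}^{o_k}_{S_i,n_{k-1}}$ and let $A_{k-1,n_{k-1}}=B_{w'^*}$ for a maximising prefix $w'^*$, so $\tr B_{w'}\le\tr B_{w'^*}$ for all $w'\in N_{k-1}^{n_{k-1}}$. For any $w'$ with $\tr B_{w'}<\tr B_{w'^*}$, strict monotonicity gives $\tr\Phi(B_{w'})<\tr\Phi(B_{w'^*})$ directly. The only delicate case is a tie, $\tr B_{w'}=\tr B_{w'^*}$: here I would argue by contrapositive, since $\tr\Phi(B_{w'})>\tr\Phi(B_{w'^*})$ would force $\tr B_{w'}>\tr B_{w'^*}$ by \eqref{th:condition}, contradicting the tie, so $\tr\Phi(B_{w'})\le\tr\Phi(B_{w'^*})$. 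In both cases $\tr\Phi(B_{w'})\le\tr\Phi(A_{k-1,n_{k-1}})$, whence $\max_{w'}\tr\Phi(B_{w'})=\tr\Phi(A_{k-1,n_{k-1}})$.

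I expect the tie case just described to be the main obstacle: the hypothesis is phrased only for strict trace inequalities, so it does not immediately control what $\Phi$ does to two distinct sub-normalised states of equal trace, and one must invoke the contrapositive to rule out an unexpected overshoot. Once this ``monotonicity commutes with the argmax'' fact is in hand, the remainder is the bookkeeping of the path decomposition, and the desired identity $\tr A_{k,S_i}=\max_{n_{k-1}\in\mathcal{S}}\tr\TOM{P}^{o_k}_{S_i,n_{k-1}}(A_{k-1,n_{k-1}})$ follows by assembling the two displays.
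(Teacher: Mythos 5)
Your overall strategy---splitting each path into its prefix plus final transition, and showing that a trace-monotone operation commutes with taking the trace-maximiser---is the same dynamic-programming principle the paper uses; the paper merely packages it as a proof by contradiction that the prefix of an optimal path is itself optimal. The strict-inequality case of your inner claim is fine. The genuine problem is your tie case. The implication you invoke there, that $\tr\Phi(B_{w'})>\tr\Phi(B_v)$ would ``force'' $\tr B_{w'}>\tr B_v$ (writing $v$ for your maximising prefix, $A_{k-1,n_{k-1}}=B_v$), is the \emph{converse} of condition \eqref{th:condition}, not its contrapositive. The contrapositive of \eqref{th:condition} reads $\tr\Phi(\alpha)\le\tr\Phi(\beta)\implies\tr\alpha\le\tr\beta$; applied to your supposed overshoot (with $\alpha=B_v$, $\beta=B_{w'}$) it yields only $\tr B_v\le\tr B_{w'}$, which is perfectly consistent with the tie. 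So, as pure logic about the two states at hand, \eqref{th:condition} does not exclude that two distinct states of equal trace are mapped to states of different trace, and your case analysis does not close.

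The gap is real but fixable, and the fix needs an ingredient you never invoke: linearity of $\Phi$. Suppose $\tr B_{w'}=\tr B_v=t$. If $t=0$, both states are the zero operator and there is nothing to prove. If $t>0$, then for every $\delta\in(0,1)$ the operator $(1-\delta)B_{w'}$ is again a sub-normalised state with $\tr\bigl((1-\delta)B_{w'}\bigr)=(1-\delta)t<t=\tr B_v$, so \eqref{th:condition} gives $(1-\delta)\tr\Phi(B_{w'})=\tr\Phi\bigl((1-\delta)B_{w'}\bigr)<\tr\Phi(B_v)$; letting $\delta\to0$ yields $\tr\Phi(B_{w'})\le\tr\Phi(B_v)$, which is exactly what your argument needs. (Equivalently, one may invoke the Remark following the theorem: \eqref{th:condition} forces each $\TOM{P}^{o}_{n_i,n_j}$ to be of the form $c\cdot\Psi$ with $\Psi$ trace preserving, so output traces are proportional to input traces and ties are preserved trivially.) It is worth noting that the paper's own proof stumbles at the very same point: its assertion ``Of course $\tr B_{(l^*_0,\ldots,l^*_{k-2},n^*_{k-1})}>\tr B_{(n^*_0,\ldots,n^*_{k-2},n^*_{k-1})}$'' fails when the two prefixes are tied maximisers. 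You correctly isolated the tie as the delicate step; just replace the appeal to the (unavailable) converse by the scaling argument above.
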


\begin{proof}{}
Let us denote
\begin{equation}
w^*_{k, S_i} = (n^*_0, \ldots, n^*_{k-1}, n^*_k = S_i) \in N_k^{S_i}
\end{equation}
as the sequence of states maximizing trace of $B_w$, so that
\begin{equation}
 \tr A_{k, S_i} = \tr B_{w^*_{k, S_i}}.
\end{equation}

We now have
\begin{equation}
\begin{split}
&\tr A_{k, S_i} =
\max_{w \in N_k^{S_i}} \tr B_w = \\
&=\max_{n_0, \ldots, n_{k-1}, n_k = S_i}
\tr \TOM{P}^{o_k}_{n_k, n_{k-1}}
	\TOM{P}^{o_{k-1}}_{n_{k-1}, n_{k-2}}  \ldots
	\TOM{P}^{o_1}_{n_{1}, n_{0}} (\pi_{n_0})
\end{split}
\end{equation}

Obviously
\begin{equation}
\tr A_{k, S_i} =
\tr \TOM{P}^{o_k}_{n^*_k, n^*_{k-1}}
	\TOM{P}^{o_{k-1}}_{n^*_{k-1}, n^*_{k-2}} \ldots
	\TOM{P}^{o_1}_{n^*_{1}, n^*_{0}} (\pi_{n^*_0})
\end{equation}

We will now prove that for $n^*_k = S_i$
\begin{equation}\label{th:implication}
\begin{split}
& w^*_{k, S_i} = (n^*_0, \ldots, n^*_{k-1}, n^*_k)
\implies\\
& \implies
w^*_{k-1, n^*_{k-1}} = (n^*_0, \ldots, n^*_{k-1}).
\end{split}
\end{equation}

Let us assume that it is not true. That would mean that
$$w^*_{k-1, n^*_{k-1}} =
(l^*_0, \ldots, l^*_{k-2}, n^*_{k-1})
\neq
(n^*_0, \ldots, n^*_{k-2}, n^*_{k-1}).$$
Of course
$$\tr B_{(l^*_0, \ldots, l^*_{k-2}, n^*_{k-1})}
>
\tr B_{(n^*_0, \ldots, n^*_{k-2}, n^*_{k-1})}$$
From this, and (\ref{th:condition}), we have
\begin{equation*}
\begin{split}
& \forall_{n_k\in\mathcal{S}, y_k \in \{1, 2, \ldots, M\}}
\tr \TOM{P}^{o_k}_{n_k, n^*_{k-1}} (B_{(l^*_0, \ldots, l^*_{k-2}, n^*_{k-1})})
> \\
& >
\tr \TOM{P}^{o_k}_{n_k, n^*_{k-1}} (B_{(n^*_0, \ldots, n^*_{k-2}, n^*_{k-1})}),
\end{split}
\end{equation*}
that leads to
$$w^*_{k, S_i} \neq (n^*_0, \ldots, n^*_{k-1}, n^*_k)$$
which is a contradiction. That proves that implication \eqref{th:implication} holds.

Then, for $n_k = S_i$
\begin{equation}
\begin{split}
&\tr A_{k, S_i}
=
\tr B_{w^*_{k, S_i}}
= \\
& =
\tr \TOM{P}^{o_k}_{n^*_k, n^*_{k-1}} (B_{w^*_{k, n^*_{k-1}}})
= \\
& =
\tr \TOM{P}^{o_k}_{n^*_k, n^*_{k-1}} (A_{k - 1, n^*_{k-1}})
= \\
& =
\max_{n_{k-1} \in \mathcal{S} } \tr \TOM{P}^{o_k}_{n_k = S_i, n_{k-1}}(A_{k - 1, n_{k-1}}).
\end{split}
\end{equation}
\end{proof}
\begin{remark}{}
It can be easily seen that Theorem \ref{th:condition} holds iff quantum operation
$\TOMel{P}^y_{n_j,n_i}$ is of form $c\cdot\Phi$, where $c\in[0,1)$ and $\Phi$ is a
quantum channel (CP-TP map).
\end{remark}

From Theorem \ref{th:theorem} we immediately derive the Viterbi algorithm for
Mealy QHMMs conditioned with \eqref{th:condition} that computes most likely
sequence of states for a given sequence $O$.

Initialization:
\begin{equation}
A_{0, S_i} = \pi_{S_i}
\end{equation}

Computation for step number $k$:
\begin{equation}
\begin{split}
&\forall_{S_i \in \mathcal{S}, k\in\{1, \dots, T\}}\ n_{k-1}^*(S_i) = \\
&= \mathop{\mathrm{argmax}}_{n_{k-1} \in \mathcal{S}} \tr \TOM{P}^{o_k}_{S_i, n_{k-1}}(A_{k-1, n_{k-1}})
\end{split}
\end{equation}

\begin{equation}
\forall_{S_i \in \mathcal{S}, k\in\{1, \dots, T\}}\ A_{k, S_i} = \TOM{P}^{o_k}_{S_i, n^*_{k-1}(S_i)}(A_{k - 1, n^*_{k-1}(S_i)}),
\end{equation}

Termination:

\begin{equation}
n_{T}^* = \mathop{\mathrm{argmax}}_{S_i \in \mathcal{S}} \tr A_{T, S_i}.
\end{equation}

The most probable state sequence is $(n_0^*,\dots,n_T^*)$, with resulting state
being $A_{T, n_{T}^*}$ with probability given by $\tr A_{T, n_{T}^*}$.

In case when \eqref{th:condition} does not apply, one can resort to exhaustive
search over all state sequences. As a result of the multitude of possible
quantum operations the behaviour of the Quantum Hidden Markov Model can be
markedly different than its classical counterpart. This is similar to the
relation of quantum and classical Markov models \cite{pawela2015generalized}.

\section{Relation with model proposed by Monras et al.}
\label{sec:monras}
In \cite{monras2011hidden} hidden quantum Markov model is defined, by Monras
et al., as  a tuple
consisting of: a~$d$-level quantum system with an initial state $\rho_0$,
alphabet $\mathcal{V}=\{V_i\}$, a~set of quantum operations (CP-TNI maps)
$\{\mathcal{K}^{V_i}\}$ such that $\sum_i \mathcal{K}^{V_i}$ is 
a~quantum channel (CP-TP map). The system evolves 
in discrete time steps and subsequently generates symbols
$O=\left(o_1,o_2,\ldots,o_T\right)$
from alphabet $\mathcal{V}$
with probability $P(o_t)=\tr \mathcal{K}^{V_i=o_t}(\rho_t)$ in every time step.
After generation of the symbol $o_t$ the
subnormalised quantum state is updated to $\rho_t=\mathcal{K}^{V_i=o_t}
(\rho_{t-1})$. Moreover, $\mathcal{K}^{V_i}$ can be represented by Kraus
operators $\{K^{V_i}_{j}\}$. It means, that $\mathcal{K}^{V_i}(\rho)=\sum_j
K^{V_i}_{j} \rho (K^{V_i}_{j})^\dagger$ and $\rho_t= 
\sum_{j}K^{V_i=o_t}_{j}\rho_{t-1} (K^{V_i=o_t}_{j})^\dagger$, where
$\sum_{j}(K^{V_i}_{j})^\dagger (K^{V_i}_{j})\leq\1$. Here we omit the
normalization factor, therefore with every sequence $O$ a subnormalised quantum
state is associated. 

In the case of Monras et al. model, the number of internal states is
equal to dimension of quantum system. In our case, the states are divided into two
distinct classes: `internal' quantum states and `external' classical states. Our model can
be reduced to the model presented by Monras et al. by performing
the following transformation. First, we need to extend the
alphabet $\mathcal{V}$ with the symbol $\$$. Second, we concatenate every
sequence $O$  with the symbol $\$$. Third, we associate symbol $\$$ with
operation of partial trace over the classical system: 
$\mathcal{K}^{\$}(\rho)=\tr_{\HH_2} \rho$. 
Fourth, we express (sub-)vector states $\alpha$ as block diagonal
(sub-)normalised quantum states and sub-TOMs $\TOM{P}$ as quantum operations. 

According to the above, we can notice, that $\mathcal{K}^{V_i}$ corresponds to
$\mathcal{P}^{V_i}$, whose elements $\mathcal{P}_{k,l}^{V_i}$ are represented by
Kraus operators
$\{E^{V_i}_{k,l,j}\}$, hence $\mathcal{P}_{k,l}^{V_i}(\rho)=\sum_j
E^{V_i}_{k,l,j} \rho (E^{V_i}_{k,l,j})^\dagger$. Let us construct the set of
operators $\{\hat{E}^{V_i}_{k,l,j}\}$ in the form $\hat{E}^{V_i}_{k,l,j}=
E^{V_i}_{k,l,j}\otimes\ketbra{k}{l}$, then similarly as in
\cite{pawela2015generalized}, it can be proved   that $$\sum_{j,k,l}
(\hat{E}^{V_i}_{k,l,j})^\dagger \hat{E}^{V_i}_{k,l,j}\leq\1.$$ Now, consider
vector
state
$\alpha_T=\TOM{P}^{o_T}\ldots\TOM{P}^{o_2}\TOM{P}^{o_1}(\pi)=[\alpha_1,\alpha_2,...,\alpha_N]^T$
with associated a block
diagonal quantum state $\rho_{\alpha}=\sum_i^N
\alpha_i\otimes\ketbra{i}{i}\in\Omega(\HH_1\otimes\HH_2)$, then
\begin{equation}
\rho_{O\$}=\tr_{\HH_2} \sum_{j,k,l}
\hat{E}^{V_i=o_T}_{k,l,j}\cdots\hat{E}^{V_i=o_2}_{k,l,j}\hat{E}^{V_i=o_1}_{k,l,j}
\rho_\alpha
(\hat{E}^{V_i=o_1}_{k,l,j})^\dagger(\hat{E}^{V_i=o_2}_{k,l,j})^\dagger\cdots(\hat{E}^{V_i=o_T}_{k,l,j})^\dagger.
\end{equation}
Thus, our model can be expressed in the language proposed by Monras et
al. However, formalism proposed in this paper has three notable advantages.
First, it presents a hybrid quantum-classical model similar to the one presented in
\cite{li2015quantum} therefore has similar field of applications. 
Our model intuitively generalizes both classical and
quantum models. Second, this model allows us to propose a generalized version
of Viterbi algorithm. Third, the use of TOM and vector states formalism reduces
the amount of memory required to numerically simulate hybrid quantum-classical
Markov models.  

\section{Examples of application}
\label{sec:examples}
\subsection{Example 1}
Let us consider alphabet $\mathcal{V}=\{a, b, c\}$. We define a set of sequences
$\mathcal{O}\subset \mathcal{V}^T$ of length $T$ and having $O_i=a$ for odd $i$, and
$O_i\in\{b,c\}$ for even $i$, i.e. $aba, abaca, acacabaca$.

Let $T=3$. Our objective is to build a model able to differentiate sequences in
$O$ from all other sequences. In classical case, our model could be given by a~HMM
parametrized by
$\lambda_1^c=(\mathcal{S},\mathcal{V},\Pi,\pi)$, where
\begin{equation}
\begin{split}
	&\mathcal{S}=\{s_1,s_2\},\quad \pi=\begin{bmatrix}
	0\\
	1
	\end{bmatrix},\\
	\Pi=
    \Bigg\{
    \Pi^a=&
	\begin{bmatrix}
	0 & 1\\
	0 & 0
	\end{bmatrix},
	\Pi^b=\begin{bmatrix}
	0 & 0\\
	\frac12 & 0
	\end{bmatrix},
	\Pi^c=\begin{bmatrix}
	0 & 0\\
	\frac12 & 0
	\end{bmatrix}
	\Bigg\}.
\end{split}
\end{equation}
It's obvious that $p(aba|\lambda_1^c)=p(aca|\lambda_1^c)=\frac12$, whereas for
other possible sequences we get $\sum_{O\in\mathcal{V}^T\setminus \mathcal{O}}
p(O|\lambda_1^c)=0$.

If we are interested in further differentiating $aba$ from $aca$, we could
either construct two HMMs, one for each sequence, i.e. for $aba$ parametrized by
$\lambda_2^c=(\mathcal{S},\mathcal{V},\Pi,\pi)$, where
\begin{equation}
\Pi=
\Bigg\{
\Pi^a=
\begin{bmatrix}
0 & 1\\
0 & 0
\end{bmatrix},
\Pi^b=\begin{bmatrix}
0 & 0\\
\frac12 & 0
\end{bmatrix}
\Bigg\}
\end{equation}
and similarly for $aca$, or by building a three-state HMM $\lambda_3^c=(\mathcal{S},\mathcal{V},\Pi,\pi)$ 
\begin{equation}
\begin{split}
&\mathcal{S}=\{s_1,s_2,s_3\},\quad \pi=\begin{bmatrix}
0\\
1\\
0
\end{bmatrix},\\
\Pi=
\Bigg\{
\Pi^a=&
\begin{bmatrix}
0 & 1 & 1\\
0 & 0 & 0\\
0 & 0 & 0\\
\end{bmatrix},
\Pi^b=\begin{bmatrix}
0 & 0 & 0\\
\frac12 & 0 & 0\\
0 & 0 & 0\\
\end{bmatrix},
\Pi^c=\begin{bmatrix}
0 & 0 & 0\\
0 & 0 & 0\\
\frac12 & 0 & 0\\
\end{bmatrix}
\Bigg\}	
\end{split}
\end{equation}
and recognize the sequences---$aba$ from $aca$---based on the output of Vitterbi algorithm.

We can solve the problem of discrimination by using QHMM
$\lambda_1^q=(\mathcal{S},\mathcal{V},\TOM{P},\pi)$, with the following
parameters
\begin{equation}
\begin{split}
&\mathcal{S}=\{s_1,s_2\},\quad \pi=\begin{bmatrix}
0_2\\
\ketbra{0}{0}
\end{bmatrix},\\
\Pi=
\Bigg\{
\TOMel{P}^a=&
\begin{bmatrix}
0_4 & \1_4\\
0_4 & 0_4
\end{bmatrix},
\TOMel{P}^b=\begin{bmatrix}
0_4 & 0_4\\
\frac12\Phi_U & 0_4
\end{bmatrix},
\TOMel{P}^c=\begin{bmatrix}
0_4 & 0_4\\
\frac12\1_4 & 0_4
\end{bmatrix}
\Bigg\},
\end{split}
\end{equation}
where $\Phi_U(\cdot)=U\cdot U^\dagger$ is unitary channel, such that 
$U=
\begin{bmatrix}
\cos \frac{\pi}{2} & -\sin \frac{\pi}{2}\\
\sin \frac{\pi}{2} & \cos \frac{\pi}{2}
\end{bmatrix}
$ 
and $0_n$, $\1_n$ are zero and identity operators over vector space of dimension $n$, respectively.

Moreover, let $\mu:\{b \mapsto \ketbra{1}{1},c \mapsto \ketbra{0}{0}\}$ be a measurement.

Let us consider the application of quantum forward algorithm on sequence $aba$.
Initial vector state of the algorithm is
$
\alpha_0=
\begin{bmatrix}
0_2\\
\ketbra{0}{0}
\end{bmatrix},
$
the final state is 
$\alpha_3=
\begin{bmatrix}
\frac12 U\ketbra{0}{0}U^\dagger\\
0_2
\end{bmatrix}.
$
The associate sub-normalised quantum state is
$\rho_{aba|\lambda^q_1}=\frac12 U\ketbra{0}{0}U^\dagger$, therefore the 
resulting sequence of probabilities is given by 
\begin{equation}
	(\tr\rho_{aba|\lambda^q_1}\mu(b),\tr\rho_{aba|\lambda^q_1}\mu(c))=\left(\frac12,0\right).
\end{equation}
It is obvious that application of quantum forward algorithm on sequence $aca$
gives result $(0,\frac12)$ and $\sum_{O\in\mathcal{V}^T\setminus \mathcal{O}}
\tr\rho_{O|\lambda^q_1}=0$.

We have shown that it is possible to construct two-state QHMM that fulfils the
same task as pair of two-states HMMs or three-state HMM.

\subsection{Example 2}
Let us consider language $A$ consisting of the sequences
$a^{k_1}b^{k_2}a^{k_3}\cdots$, where $k_1,k_2,k_3,... $ are nonnegative odd
integers and $a,b$ are symbols from alphabet $\mathcal{V}=\{a, b\}$. 
In other words language $A$ contains these sentences in which
odd length subsequences of letters $a$ and $b$ alternate.

Classically, sequences from this language can be generated by four-state HMM
$\lambda^c=(\mathcal{S},\mathcal{V},\Pi,\pi)$ presented in Fig.~\ref{fig:CHMM},
where
\begin{equation}
\begin{split}
\mathcal{S}=\{s_1,s_2,s_3,s_4\},\quad &\pi=\begin{bmatrix}
1\\
0\\
0\\
0
\end{bmatrix},\\
\Pi=
\Bigg\{
\Pi^a=
\begin{bmatrix}
0 & 1 & 0 & 0\\
\frac{1}{2} & 0 & 0 & 0\\
0 & 0 & 0 & 0\\
\frac{1}{2} & 0 & 0 & 0\\
\end{bmatrix},
&\Pi^b=\begin{bmatrix}
0 & 0 & 0 & \frac{1}{2}\\
0 & 0 & 0 & 0\\
0 & 0 & 0 & \frac{1}{2}\\
0 & 0 & 1 & 0\\
\end{bmatrix}
\Bigg\}.
\end{split}
\end{equation}
It is easy to check, that for any sequence $a^{k_1}b^{k_2}a^{k_3}\cdots$ from the language $A$, probability $p(a^{k_1}b^{k_2}a^{k_3}\cdots|\lambda^c)$ is nonzero and equals $p(a^{k_1}b^{k_2}a^{k_3}\cdots|\lambda^c)=(\frac{1}{2})^{\frac{k_1+1}{2}}(\frac{1}{2})^{\frac{k_2+1}{2}}(\frac{1}{2})^{\frac{k_3+1}{2}}\cdots$. Moreover, if any $k_i$ is even, then ${p(a^{k_1}b^{k_2}a^{k_3}\cdots|\lambda^c)=0}$.

Let us consider matrix of probabilities $p(a^{k_1}b^{k_2}a^{k_3}\cdots|\lambda^c)$ given as
	\begin{equation}\label{eq:hankelinf}
	\tilde{H}=
	\begin{bmatrix}
	1&p(a|\lambda^c)&p(b|\lambda^c)&p(aa|\lambda^c)&p(ab|\lambda^c)&p(ba|\lambda^c)&\cdots\\
	p(a|\lambda^c)&p(aa|\lambda^c)&p(ba|\lambda^c)&p(aaa|\lambda^c)&p(aba|\lambda^c)&p(baa|\lambda^c)&\cdots\\
	p(b|\lambda^c)&p(ab|\lambda^c)&p(bb|\lambda^c)&p(aab|\lambda^c)&p(abb|\lambda^c)&p(bab|\lambda^c)&\cdots\\
	p(aa|\lambda^c)&p(aaa|\lambda^c)&p(baa|\lambda^c)&p(aaaa|\lambda^c)&p(abaa|\lambda^c)&p(baaa|\lambda^c)&\cdots\\
	p(ab|\lambda^c)&p(aab|\lambda^c)&p(bab|\lambda^c)&p(aaab|\lambda^c)&p(abab|\lambda^c)&p(baab|\lambda^c)&\cdots\\
	p(ba|\lambda^c)&p(aba|\lambda^c)&p(bba|\lambda^c)&p(aaba|\lambda^c)&p(abba|\lambda^c)&p(baba|\lambda^c)&\cdots\\
	\vdots&\vdots&\vdots&\vdots&\vdots&\ddots\\
	\end{bmatrix}.
	\end{equation}

Notice, that any upper-left  corner of matrix $\tilde{H}$ is known as
the Hankel matrix. Denote by $\tilde{H}_d$ a upper-left $d$-size submatrix of
matrix $\tilde{H}$. Subsequently, let us notice that
{
	\renewcommand{\arraystretch}{1.5}
	\setlength{\arraycolsep}{5pt}
	\begin{equation}
	\mathrm{rank}(\tilde{H}_{11})=
	\mathrm{rank}\begin{bmatrix}
	1&\frac{1}{2}&0&\frac{1}{2}&\frac{1}{4}&0&0&\frac{1}{4}&0&\frac{1}{8}&\frac{1}{4}\\
	\frac{1}{2}&\frac{1}{2}&0&\frac{1}{4}&\frac{1}{8}&0&0&\frac{1}{4}&0&\frac{1}{8}&0\\
	0&\frac{1}{4}&0&0&\frac{1}{4}&0&0&\frac{1}{8}&0&\frac{1}{16}&\frac{1}{8}\\
	
	\frac{1}{2}  &   \frac{1}{4} &   0    &  \frac{1}{4} &   \frac{1}{8} &  0   &   0  &    \frac{1}{8} &  0 & \frac{1}{16} & 0\\
	\frac{1}{4} &   0   &   0   &   \frac{1}{8} &  \frac{1}{16} &  0  &    0  &    0   &   0  &    0&
	0\\
	0   &    \frac{1}{8}  &  0   &    0    &   0   &    0  &     0   &    \frac{1}{16} &
	0    &   \frac{1}{32} & \frac{1}{16}\\
	0   &   \frac{1}{4} &   0   &   0    &  \frac{1}{8} &  0   &   0   &   \frac{1}{8} &  0&
	\frac{1}{16} & \frac{1}{8}\\
	\frac{1}{4}  &  \frac{1}{4} &   0  &    \frac{1}{8} &  \frac{1}{16}&  0  &    0   &   \frac{1}{8} &  0&
	\frac{1}{16} & 0\\
	0    &   \frac{1}{8}  &  0    &   0    &   0   &    0   &    0   &    \frac{1}{16} &
	0    &   \frac{1}{32} & 0   &\\
	\frac{1}{8} &   0  &     0  &     \frac{1}{16} &  \frac{1}{32} & 0    &   0 &      0     &  0 &
	0  &     0\\
	\frac{1}{4} &   0   &   0 &     \frac{1}{8} &  \frac{1}{16} & 0  &    0 &     0 &     0 &     0 &
	0    
	\end{bmatrix}=4.
	\end{equation}}
Since $\mathrm{rank}(\tilde{H}_{11})=4$, four-state HMM
$\lambda^c=(\mathcal{S},\mathcal{V},\Pi,\pi)$ cannot be reduced to HMM with
smaller number of states \cite{vidyasagar2011complete,huang2014minimal}.

The application of the QHMM for the generation of sequences from $A$ can reduce
the number of the states to three. Let us consider QHMM
$\lambda^q=(\mathcal{S},\mathcal{V},\Pi,\pi)$ presented in Fig.~\ref{fig:QHMM},
with
\begin{equation}
\begin{split}
\mathcal{S}=\{s_1,s_2,s_3\},\quad &\pi=\begin{bmatrix}
0_2\\
0_2\\
\ketbra{0}{0}
\end{bmatrix},\\
\Pi=
\Bigg\{
\TOMel{P}^a=
\begin{bmatrix}
0_4 & \Phi_{\ketbra{+}{+}} & \Phi_{H\!\ketbra{0}{0}}\\
\Phi_{H\!\ketbra{0}{0}} & 0_4 & 0_4\\
0_4 & 0_4 & 0_4
\end{bmatrix},
&\TOMel{P}^b=\begin{bmatrix}
0_4 & 0_4 & 0_4\\
0_4 & 0_4 & \Phi_{H\!\ketbra{1}{1}}\\
\Phi_{H\!\ketbra{1}{1}} & \Phi_{\ketbra{-}{-}} & 0_4
\end{bmatrix}
\Bigg\},
\end{split}
\end{equation}
where $\Phi_X(\cdot)=X\cdot X^\dagger$ and $X\in\{\ketbra{+}{+},\ketbra{-}{-},H\!\ketbra{0}{0},H\!\ketbra{1}{1}\}$.

\begin{figure}[ht!]
	\centering
	\subfigure[]{%
		\centering
		\small
		\begin{tikzpicture}[->,>=stealth',shorten >=1pt,auto,node distance=2.8cm,
		semithick]
		\tikzstyle{every state}=[fill=white,draw=black,thick,text=black,scale=1]
		\node[state]         (S1) at (0,0) {$s_1$};
		\node[state]         (S2) at (1.5,0) {$s_2$};
		\node[state]         (S3) at (3,0) {$s_3$};
		\node[state]         (S4) at (4.5,0) {$s_4$};
		\path (S1) edge  [bend right=-40] node[sloped,above]
		{$\frac{1}{2}\Big|a$} (S2);
		\path (S1) edge  [bend right=-80] node[sloped,above]
		{$\frac{1}{2}\Big|a$} (S4);
		\path (S2) edge  [bend right=-40] node[sloped,below]
		{$1\Big|a$} (S1);

		\path (S4) edge  [bend right=-40] node[sloped,below]
		{$\frac{1}{2}\Big|b$} (S3);
		\path (S4) edge  [bend right=-80] node[sloped,below]
		{$\frac{1}{2}\Big|b$} (S1);
		\path (S3) edge  [bend right=-40] node[sloped,above]
		{$1\Big|b$} (S4);
		
		\end{tikzpicture}
		\label{fig:CHMM}
	}%
	\subfigure[]
	{%
		\centering
		\begin{tikzpicture}[->,>=stealth',shorten >=1pt,auto,node distance=2.8cm,
		semithick]
		\tikzstyle{every state}=[fill=white,draw=black,thick,text=black,scale=1]
		\node[state]         (S1) at (0,0) {$s_1$};
		\node[state]         (S2) at (2.25,0) {$s_2$};
		\node[state]         (S3) at (4.5,0) {$s_3$};
		
		\path (S1) edge  [bend right=-40] node[sloped,above]
		{$\Phi_{H\ketbra{0}{0}}\Big|a$} (S2);
		\path (S1) edge  [bend right=-80] node[sloped,above]
		{$\Phi_{H\ketbra{1}{1}}\Big|b$} (S3);
		\path (S2) edge  [bend right=-40] node[sloped,below]
		{$\Phi_{\ketbra{+}{+}}\Big|a$} (S1);

		\path (S3) edge  [bend right=-40] node[sloped,below]
		{$\Phi_{H\ketbra{1}{1}}\Big|b$} (S2);
		\path (S3) edge  [bend right=-80] node[sloped,below]
		{$\Phi_{H\ketbra{0}{0}}\Big|a$} (S1);
		\path (S2) edge  [bend right=-40] node[sloped,above]
		{$\Phi_{\ketbra{-}{-}}\Big|b$} (S3);
		
		\end{tikzpicture}
		\label{fig:QHMM}
	}%	
	\caption{Examples of HMM (a) and QHMM (b) generating with nonzero probabilities sequences $a^{k_1}b^{k_2}a^{k_3}\cdots$, where $k_1,k_2,k_3,\ldots $ are nonnegative odd integers.} %
	\label{fig:example2}
\end{figure}
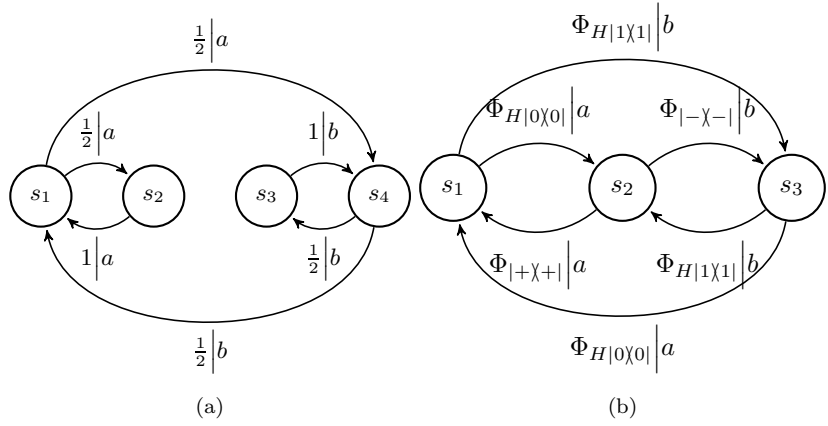
Notice that, for any sequence $a^{k_1}b^{k_2}a^{k_3}\cdots$, where $k_1,k_2,k_3,... $ are nonnegative odd integers, the final state is given as 
$$\alpha_{k_1k_2k_3\dots}=\begin{bmatrix}
(\frac{1}{2})^{\frac{k_1+1}{2}}(\frac{1}{2})^{\frac{k_2+1}{2}}(\frac{1}{2})^{\frac{k_3+1}{2}}\cdots\left[\begin{matrix}1 & 1\\1 & 1 \end{matrix}\right]\\
0_2\\
0_2
\end{bmatrix}$$ or
$$\alpha_{k_1k_2k_3\dots}=\begin{bmatrix}
0_2\\
0_2\\
(\frac{1}{2})^{\frac{k_1+1}{2}}(\frac{1}{2})^{\frac{k_2+1}{2}}(\frac{1}{2})^{\frac{k_3+1}{2}}\cdots\left[\begin{matrix}1 & -1\\-1 & 1 \end{matrix}\right]
\end{bmatrix}.$$ 
Moreover, if any $k_i$ is even, then $\alpha_{k_1k_2k_3\dots}=\begin{bmatrix}
0_2\\
0_2\\
0_2
\end{bmatrix}$. Therefore we have shown, that it is possible to construct
thee-state QHMM generating sequences 
from $A$ with the same probabilities like its classical four-state counterpart.
Those probabilities $\tr\rho_{a^{k_1}b^{k_2}a^{k_3}\cdots|\lambda^q}$ are obtained
from trivial 
measurements of sub-normalised quantum states $\rho_{a^{k_1}b^{k_2}a^{k_3}|\lambda^q\cdots}$.

\section{Conclusions}\label{sec:conclusions}
We have introduced a new model of Quantum Hidden Markov Models based on the
notions of Transition Operation Matrices and Vector States. We have shown that
for a subclass of QHMMs and emission sequences the modified Viterbi algorithm
can be used to calculate the most likely sequence of internal states that lead
to a given emission sequence. Because of the fact that the structure of Quantum
Hidden Markov Models is more complicated than their classical counterparts, in
general case the most likely sequence of states leading to a given emissions
sequence has to be calculated using extensive search. We have also proposed a
formulation of the Forward algorithm that is applicable for general QHMMs.

For given a sequence of symbols of length $T$, $O=(o_1,o_2,...,o_T)$, a sequence
of states $N_T=(n_0,n_1,...n_T)$ and a classical Mealy HMM with parameters $\lambda$,
the joint probability distribution $P(N_T,O)$ can be factored
into 
\begin{equation}
P(N_T,\!O)\!=\!P(n_0)\prod_{t=1}^T\!P(n_t|o_t,n_{t-1})P(o_t|n_{t-1}).
\end{equation}
As in the case of classical Moore HMM \cite{ghahramani2001introduction}, the
above factorization can be considered as a simple dynamic Bayesian Network.
Hence, the concept of QHMM proposed in this manuscript gives basis to quantum
generalization of dynamic Bayesian Networks.

We believe that proposed model can find applications in modelling systems that
posses both quantum and classical features.

\section*{Acknowledgements}
We would like to thank Z. Pucha\l{}a and \L{}. Pawela for fruitful discussions
about subject of this paper.

This paper was partially supported by Polish National Science Centre. 
P.~Gawron was supported by  grant number 2014/15/B/ST6/05204.
P.~G\l{}omb was supported by grant number DEC-2011/03/D/ST6/03753.
M.~Cholewa was supported by grant number  DEC-2012/07/N/ST6/03656.
D.~Kurzyk was supported by grant number UMO-2013/11/N/ST6/03090.

\end{document}